\documentclass[arxiv]{imsart}

%% Packages
\RequirePackage{amsthm,amsmath,amsfonts,amssymb}
\RequirePackage{natbib}
\RequirePackage[colorlinks,citecolor=blue,urlcolor=blue]{hyperref}
\usepackage{bm}
\usepackage{multirow}
\usepackage{float}
\usepackage{graphicx}
\usepackage{mathtools}
\usepackage{color}
\usepackage[linesnumbered,ruled,vlined]{algorithm2e}
\usepackage{booktabs}
\usepackage{geometry} % Added to control margins
\geometry{a4paper, margin=1.5in} % Set standard margins to match typical full-width layout

\newcommand{\R}{\mathbb{R}}
\DeclareMathOperator*{\argmin}{arg\,min}

\startlocaldefs
%%%%%%%%%%%%%%%%%%%%%%%%%%%%%%%%%%%%%%%%%%%%%%
%%                                          %%
%% Uncomment next line to change            %%
%% the type of equation numbering           %%
%%                                          %%
%%%%%%%%%%%%%%%%%%%%%%%%%%%%%%%%%%%%%%%%%%%%%%
%\numberwithin{equation}{section}
%%%%%%%%%%%%%%%%%%%%%%%%%%%%%%%%%%%%%%%%%%%%%%
%%                                          %%
%% For Assumption, Definition, Example,     %%
%% Notation, Property, Remark, Fact         %%
%% use \theoremstyle{remark}                %%
%%                                          %%
%%%%%%%%%%%%%%%%%%%%%%%%%%%%%%%%%%%%%%%%%%%%%%
\theoremstyle{remark}
\newtheorem{thm}{Theorem}

\theoremstyle{remark}
\newtheorem*{remark*}{Remark}

%%%%%%%%%%%%%%%%%%%%%%%%%%%%%%%%%%%%%%%%%%%%%%
%% Please put your definitions here:        %%
%%%%%%%%%%%%%%%%%%%%%%%%%%%%%%%%%%%%%%%%%%%%%%
\newcommand{\ignore}[1]{}

\endlocaldefs

\begin{document}

\pagestyle{plain} % Set page style to plain to remove unwanted headers

\begin{frontmatter}

\title{A Robust Local Fr\'echet Regression Using Unbalanced Neural Optimal Transport with Applications to Dynamic Single-cell Genomics Data}

\begin{aug}
    \author{\fnms{Binghao} \snm{Yan}\ead[label=e1]{Binghao.Yan@Pennmedicine.upenn.edu}},
    \author{\fnms{Hongzhe} \snm{Li}\thanks{Corresponding author}\ead[label=e2]{hongzhe@upenn.edu}}
    \address{Department of Biostatistics, Epidemiology, and Informatics, University of Pennsylvania, \printead{e1,e2}}
\end{aug}

\begin{abstract}
Single-cell RNA sequencing (scRNA-seq) technologies have enabled the profiling of gene expression for a collection of cells across time during a dynamic biological process. Given that each time point provides only a static snapshot, modeling and understanding the underlying cellular dynamics remains a central yet challenging task in modern genomics. To associate biological time with single cell distributions, we develop a robust local Fr\'echet regression for interpolating the high-dimensional cellular distribution at any given time point using data observed over a finite time points. To allow for robustness in cell distributions, we propose to apply the unbalanced optimal transport-based Wasserstein distance in our local Fr\'echet regression analysis. We develop a computationally efficient algorithm to generate the cell distribution for a given time point using generative neural networks. The resulting single cell generated models and the corresponding transport plans can be use to interpolate the single cells at any unobserved time point and to track the cell trajectory during the cell differentiation process. We demonstrate the methods using three single cell differentiation data sets, including differentiation of human embryonic stem cells into embryoids, mouse hematopoietic and progenitor cell differentiation, and reprogramming of mouse embryonic fibroblasts.  We show that the proposed methods lead to better single cell interpolations, reveal different cell differential trajectories, and identify early genes that regulate these cell trajectories. 

\end{abstract}

\begin{keyword}
\kwd{Barycenter}
\kwd{Cell trajectory}
\kwd{Fixed point}
\kwd{Neural network}
\kwd{Wasserstein distance}
\end{keyword}

\end{frontmatter}

\section{Introduction}

In many areas of modern data science, we increasingly encounter distributional data paired with covariates of interest—such as time, age, or experimental conditions. An important and rapidly growing example arises in single-cell RNA sequencing (scRNA-seq) studies to understand the dynamic cell differentiation process, where researchers profile single-cell-level transcriptomic data across multiple time points \citep{moon2019embryoid, weinreb2020statefate, schiebinger2019optimal}. At each time point, a population of cells are profiled, resulting in a high-dimensional distribution of cells characterized by their gene expressions. These cellular distributions reflect dynamic biological processes such as development, aging, or reprogramming. In this setting, each collection of cells can be viewed as a sample from a probability distribution conditioned on a covariate (e.g., time). Such data contains rich information about how biological systems evolve. However, most existing methods focus on gene-level or cell-level analyses, which do not capture the full geometry and structure of the underlying cellular distributions. Statistical models to extract insights from these distributional datasets are greatly needed. 

In this paper, we focus on scRNA-seq datasets measured over time during a biological process, where time can be treated as a covariate.  Understanding how cell gene expression distributions evolve over time is critical for uncovering lineage relationships and disease mechanisms. Several challenges arise in this setting. First, single-cell technologies are inherently destructive—individual cells cannot be tracked over time—making it difficult to directly analyze the cell dynamics. Second, due to experimental cost, observations are often available only at a few time points. Third, noise and technical variability can make the empirical distribution at each time point an imperfect reflection of the true biological state. Lastly, the high dimensionality of the data poses significant computational challenges for statistical modeling and inference.

{Optimal transport} (OT), seeking the most cost-efficient way to rearrange one distribution into another by minimizing the total ``movement” cost between their mass points \citep{kantorovich1942, villani2008optimal}, has emerged as a powerful tool for studying cell dynamics, with several successful applications in this area. Waddington-OT \citep{schiebinger2019optimal} leverages optimal transport to learn temporal couplings between adjacent time snapshots, effectively uncovering the dynamics during the cellular reprogramming process. However, a notable limitation is its reliance on only two adjacent distributions, which results in a loss of long-range information. JKOnet \citep{bunne2022proximal} integrates Jordan–Kinderlehrer–Otto (JKO) flow with optimal transport to predict system dynamics but lacks the ability to interpolate between the two time snapshots. TrajectoryNet \citep{tong2020trajectorynet} explores unbalanced dynamic transport and employs neural ODEs to recover trajectories. MioFlow \citep{huguet2022manifold} improves on TrajectoryNet by incorporating diffusion and manifold learning into the framework. Despite their strengths, we observe that in noisy real-world datasets, the learned trajectories often exhibit bias due to the noise, particularly at later time points.

In this work, we propose a new approach based on {robust local Fr\'echet regression}  for analyzing cellular dynamics based on scRNA-seq data measured at multiple time points. Rather than directly inferring individual cell trajectories, our goal is to model and interpolate the cell gene expression distributions  over time. Given a time value $t$, we estimate the corresponding distribution using observed samples from nearby time points. To gain robustness of the estimate,  we use the Wasserstein-distance defined from the unbalanced optimal transport (UOT) \citep{chizat2018unbalanced, choi2024generative} between two probability measures instead of the standard OT. By penalizing mass creation/destruction rather than forcing exact transport, unbalanced OT is less sensitive to small-scale noise or outlier spikes in either distribution. This yields more stable mappings and distances in analysis of single cell data.

Since the output from the local Frechet regression is a probability measure in high dimension, we solve the regression  objective using a neural network-based fixed-point algorithm, where 
we extend the classical neural fixed-point algorithm \citep{korotin2022wasserstein} to the unbalanced setting, improving robustness to noise and algorithm convergence in real datasets. At the core of our approach is a reparametrization of the estimand via a generative model, which converts an intractable optimization over probability measures into a more manageable optimization over the model’s parameters. 

Our method enjoys several advantages: (1) it is {nonparametric}, relying only on kernel-weighted averages of distributions and making no assumptions about the underlying dynamics; (2) it incorporates ideas from unbalanced optimal transport, making it well-suited for noisy, imperfect single-cell measurements; and (3) it {scales to high-dimensional data} through neural network parameterizations.

We demonstrate the effectiveness of our method using  both synthetic and real datasets. In addition to more accurate interpolation of cell gene expressions, we show that the estimated distributions can be used to reconstruct individual cell trajectories through composing the unbalanced OT maps between interpolated distributions. This allows a better recovery of the developmental trajectories and the identification of lineage fates and marker genes.

The rest of the paper is organized as follows. In Section \ref{ROLFR}, we provide a brief comparison between optimal transport and unbalanced optimal transport based Wasserstein distances before we introduce the robust local Fr\' echet regression. We then provide a fixed point algorithm  to solve the local Fr\' echet regression estimation in Section \ref{Fixed-point}. We provide more detailed algorithm to solve this fixed-point algorithm using neural network parameterizations in Section \ref{NeuralNet}. In Section \ref{Realdata}, 
we present detailed analysis of three single cell data sets during different cell differentiation dynamic processes.  Finally, in Section \ref{Discussion}, we  provide a brief discussion of the methods and possible extensions.

\ignore{
Our main contributions are as follows:
\begin{itemize}
    \item We introduce a flexible framework for interpolating cell population distributions over time using local Fr\'echet regression.
    \item We extend the classical neural fixed-point algorithm to the unbalanced setting, improving robustness to noise and convergence in real datasets.
    \item We validate our method on three real scRNA-seq temporal datasets, demonstrating good interpolation quality and enabling inference of cell fate decisions and associated genes.
\end{itemize}
}

\section{A Robust Local Fr\' echet Regression}
\label{ROLFR}
We consider the setting of  scRNA-seq studies with gene expressions measured at $N$ time points during a dynamic biological process, where at each time point $t_i$, we have $n_i$ cells with $p$ genes measured. These $p$ gene expressions characterize the cellular states. In practice, the low-dimensional representations using principal components  (PCs) or tSNEs are used to quantify the cell state.   In this paper, we mainly use $d$ PCs to represent the cell state. At each time point $t_i$, let $\nu_i$
be the empirical distribution of these $d$ PCs of all the cells measured. We are interested in learning how such cellular distribution $\nu$ changes over time based on the observed snapshots of the cells measured, represented by their empirical distributions $\nu_i, i=1, \cdots, N$.

We follow the standard  notation in optimal transport. Let $\mathcal{X}, \mathcal{Y}$ be compact and complete subspaces of $\mathbb{R}^d$. Let $\mathcal{P}_2(\mathcal{X})$ denote the set of all probability measures on $\mathcal{X}$ with finite second moments, and let $\mathcal{P}_{2,ac}(\mathcal{X})$ denote the subset of probability measures that are absolutely continuous with respect to the Lebesgue measure.

\subsection{Balanced and unbalanced optimal transport}

We begin with an introduction to the fundamental concepts of optimal transport. The optimal transport  between two probability measures aims at finding the least cost way to transport the source measure to the target measure. Given two probability measures   $\mu \in \mathcal{P}_{2,ac}(\mathcal{X})$ and  $\nu \in \mathcal{P}_{2,ac}(\mathcal{Y})$, the Kantorovich formulation \citep{kantorovich1942} for the optimal transport problem and its corresponding Wasserstein distance (squared) is
\begin{equation}
    W_2^2(\mu, \nu) \coloneqq \inf_{\pi \in \Pi(\mu, \nu)} \int_{\mathcal{X} \times \mathcal{Y}} \frac{1}{2} \|x - y\|^2 d\pi(x, y),
    \label{eq:kantorovichOT}
\end{equation}
where the infimum is taken over all transport plans $\pi$, i.e., the coupling of $\mu$ and $\nu$. Under mild conditions on $(\mu,\mathcal{X})$, $(\nu,\mathcal{Y})$, the minimizer {optimal transport plan} $\pi^*$ always exists \citep{villani2008optimal}. 

The Kantorovich formulation~\eqref{eq:kantorovichOT} admits the following {semi-dual} form. The functions $u$ and $v$ are called {potentials}. There exist optimal $u^*, v^*$ satisfying $u^* = (v^*)^c$, where $f^c(y) \coloneqq \min_{x \in \mathbb{R}^D} \left[ \frac{1}{2} \|x - y\|^2 - f(x) \right]$ is the $c$-transform of $f$. We rewrite \eqref{eq:kantorovichOT}  as
\begin{equation}
    W_2^2(\mu, \nu) = \sup_{v\in  L^1(\nu)} \left\{ \int v^c(x) d\mu(x) + \int v(y) d\nu(y) \right\},
\end{equation}
where  functions $v$ is called a {potential} function, where $v^c(x) \coloneqq \min_{y \in \mathcal{Y}} \left[ \frac{1}{2} \|x - y\|^2 - \nu(y) \right]$ is the $c$-transform of the potential function.

In practice, single cell datasets often contain outliers, and hard constraints on marginals may result in undesired transport map. The unbalanced optimal transport addresses this issue by relaxing the mass conservation assumption, thereby avoiding problematic transport maps between the distributions. Specifically, it replaces hard constraints on marginals with a penalty term to measure the discrepancy between two measures. In this work, we focus on the scenario where only the constraint on the target measure $\nu$ is relaxed and define the corresponding unbalanced OT-based Wasserstein distance as 
\begin{equation}
W^2_{2,ub}(\mu, \nu) := \inf_{\pi \in \Pi(\mu)} \left[ \int_{\mathcal{X} \times \mathcal{Y}} \frac{1}{2} \|x - y\|^2 d\pi(x, y) + \tau D_{\psi}(\pi_1 \| \nu) \right],
\label{eq:unbalanced}
\end{equation}
where $\pi$ is the set of positive measures on the product space whose first marginal is $\mu$, and $D_\psi$ is the Csisz\`ar divergence which is defined as:
\[
D_\psi(\mu \| \nu) =
\begin{cases}
    \int_\Omega \psi\left(\frac{d\mu}{d\nu}\right) \, d\nu & \text{if } \mu \ll \nu, \\
    +\infty & \text{otherwise},
\end{cases}
\]
where $\psi: [0, \infty) \to \mathbb{R}$ is a convex, lower semi-continuous, and non-negative function with $\psi(1) = 0$, $\frac{d\nu}{d\mu}$ is the Radon-Nikodym derivative of $\nu$ with respect to $\mu$. The parameter $\tau$ controls the level of imbalance tolerance—referred to as the unbalanced tolerance in this paper—and as $\tau \rightarrow \infty$, the UOT reduces to the classical OT.  

This formulation implicitly assumes that the second marginal of $\pi$ is absolutely continuous with respect to $\nu$. We then introduce the semi-dual formulation of the above unbalanced optimal transport problem
\begin{equation}
    W_{2,ub}^2(\mu, \nu) = \sup_{v\in \mathcal{C}(\mathcal{Y})} \left\{ \int v^c(x) d\mu(x) + \int -\psi_\tau^*\left(-v(y)\right) d\nu(y) \right\}
    \label{eq:unbalanceddual}
\end{equation}
where the supremium is taken over the set of continuous functions over the domain $\mathcal{Y}$. To incorporate the unbalanced tolerance parameter $\tau$, we scale the convex conjugate and define $\psi_\tau^*(s)$ as follows: $\psi^*(s) = \sup_{t \in \mathbb{R}} \left\{ st - \tau\psi(t) \right\}$. According to \cite{choi2024generative}, any non-decreasing, convex, and differentiable function can be a candidate of $\psi_\tau^*$. In particular, when the convex conjugate $\psi_\tau^*(s) = s$, the formulation reduces to the classic OT problem. In this work, we adopt the Kullback-Leibler (KL) divergence as the choice of $\psi$, where $\psi(t) = t\log(t) - t + 1$. It's scaled convex conjugate admits the closed-form expression: $\psi^*_\tau(s) =\sup\limits_{t>0}\{st - \tau\cdot\psi(t)\} = \tau(e^{s/\tau} - 1), \forall s$.

\subsection{A robust local Fr\'echet regression for probability measures}
In single cell applications that we consider in this paper, we  are interested in learning how such cellular distribution $\nu$ changes over time based on the observed snapshots of the cells measured at $N$ different time points, $\nu_i, i=1, \cdots, N$. In the setting of regression analysis,   we are interested in estimating  the mean of probability measures {conditioned on time}. 
We consider the general setting of  local {Fr\'echet regression} \citep{petersen2019frechet} to estimate the conditional mean of general objects in a metric space: 
\[
\bar{\mu}(t) = \underset{\mu \in \mathcal{P}_{2,ac}(\mathcal{X})}{\arg\min} \frac{1}{N} \sum_{i=1}^N s_i(t,h) \, d^2(\mu, \nu_i),
\]
where $d(\cdot, \cdot)$ denotes the metric on the space of probability distributions, and $s_i(t,h)$ is a kernel-based weight encoding the relevance of subject $i$'s distribution at time $t_i$ to the target time $t$. The bandwidth $h > 0$ controls the degree of smoothing. We adopt the local linear weight form of \cite{petersen2019frechet}, where
\[
s_i(t,h) = \frac{1}{\hat{\sigma}_0^2} K_h(t_i - t)\left[\hat{\mu}_2 - \hat{\mu}_1 (t_i - t)\right],
\]
with $K_h(t_i - t) = h^{-1} K\left(\frac{t_i - t}{h}\right)$ being a scaled kernel function. The terms $\hat{\mu}_1$, $\hat{\mu}_2$, and $\hat{\sigma}_0^2$ are kernel-weighted empirical moments defined in \cite{petersen2019frechet}. Smaller values of $h$ assign more weight to time points closer to $t$, leading to more localized estimates.

When adopting the Wasserstein distance as the metric, the local Fr\'echet regression corresponds to a {weighted Wasserstein barycenter}. To enhance robustness, we specifically use the {unbalanced} Wasserstein distance, leading to the objective function that we want to optimize over the distribution $\mu_t$:
\begin{equation}
V(\mu_t) = \sum_{i=1}^N \alpha_i(t) \, W^2_{2,ub}(\mu_t, \nu_i),
\label{eq:UOTbarycenter}
\end{equation}
where $\alpha_i(t) \geq 0$ and $\sum_{i=1}^N \alpha_i(t) = 1$. In practice, we normalize the weights $(\alpha_1(t), \cdots, \alpha_N(t))$ 
$=(s_1(t,h)/N, \dots, s_N(t,h)/N)$ to satisfy the barycenter constraint. The resulting barycenter $\bar \mu_t=\argmin V(\mu_t)$ provides an estimate of the cell population distribution at time $t$. For the remainder of the paper, since we estimate each distribution separately at each time 
$t$, we drop the subscript 
$t$ for notational simplicity.

\section{A Fixed Point Approach to Local Fr\'echet Regression Estimation} \label{Fixed-point}

We propose a fixed-point algorithm to obtain  the robust local Fr\'echet regression estimate by solving the optimization problem \eqref{eq:UOTbarycenter}, extending the  previous results of using the fixed point iteration  to estimate the classical Wasserstein barycenter by \citet{alvarez2016fixed}.

Let $\nu_i$ denote the cell distribution corresponding to time  $t_i$ for $i=1,2,\dots,N$, and let $\mu$ denote the robust barycenter to be estimated by minimizing the objective function \eqref{eq:UOTbarycenter}, which involves the calculation of the unbalanced optimal transport problem. Consider the optimal solution $\pi^*$ of the unbalanced optimal transport problem~\eqref{eq:unbalanced}, which is a Borel measure defined on $\mathcal{X}\times \mathcal{Y}$. This solution admits a disintegration with respect to the first marginal, expressed as $\pi^*(\mathrm{d}x,\mathrm{d}y) 
  = \mu(\mathrm{d}x)\,\gamma^*_x(\mathrm{d}y)$, where  $ \gamma^*_x(\mathrm{d}y)$ is called the optimal conditional plan of $Y$ given $x$.
Utilizing these conditional plans, we define an ``average map" function $\bar{T}$ as:
\begin{equation}
    \bar{T}(\mu) \coloneq \mathcal{L}(\sum_{i=1}^N \alpha_iT^*_{\mu\to\nu_i}(\mu))
    \label{averge map}
\end{equation}
where the deterministic push-forward function $T^*$ is defined by the condition mean $T^*(x)=\int_Y y \,\gamma^*_x(\mathrm{d}y)$. When a deterministic map exists, the optimal conditional plan reduces to a delta measure, and the definition of the map aligns with the standard OT case. 

We show that the minimizer of ~\eqref{eq:UOTbarycenter} is a fixed point of the average map $\bar{T}$, i.e., $\sum_{i=1}^N \alpha_iT^*_{\mu\to\nu_i}(x) = x$ for $\mu-a.s.$ $x$. We then propose the iterative fixed-point algorithm for solving the robust local Fr\'echet regression:
\begin{equation}
    \mu_{n+1} = \bar{T}(\mu_n).
    \label{iterative algor}
\end{equation}
The following theorem shows that that objective function $V$ is non-increasing  during the fixed-point iterations. 

\begin{thm}
\label{thm:fix_point}
Let $\nu_i$, $i=1,2,\dots,N$ be a set of measures for which we estimate the robust Wasserstein barycenter. If $\mu \in \mathcal{P}_{2,\text{ac}}(\mathbb{R}^d)$, then
\[
    V(\mu) \geq V(\bar{T}(\mu)),
\]
where $\bar{T}$ is the average map as defined in equation (3). In particular, if $\mu$ is the robust Wasserstein barycenter, then it is a fixed point of the average map $\bar{T}$.
\end{thm}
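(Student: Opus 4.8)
The plan is to adapt the variational-descent argument of \citet{alvarez2016fixed} for classical Wasserstein barycenters to the unbalanced setting; the one genuinely new point is to choose a competitor coupling that leaves the Csisz\'ar penalty untouched, so that it cancels and only the quadratic transport cost survives. Fix $\mu\in\mathcal P_{2,\mathrm{ac}}(\mathbb R^d)$ and for each $i$ let $\pi_i^*$ be an optimal plan for $W_{2,ub}^2(\mu,\nu_i)$ in \eqref{eq:unbalanced}, with disintegration $\pi_i^*(\mathrm dx,\mathrm dy)=\mu(\mathrm dx)\,\gamma^*_{i,x}(\mathrm dy)$ (so each $\gamma^*_{i,x}$ is a probability kernel, since the first marginal is exactly $\mu$), barycentric projection $T_i:=T^*_{\mu\to\nu_i}$ with $T_i(x)=\int y\,\gamma^*_{i,x}(\mathrm dy)$, and $\mathcal Y$-marginal $\rho_i$ (the marginal penalized against $\nu_i$). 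Write $S:=\sum_{i=1}^N\alpha_i T_i$, so that $\bar T(\mu)=S_\#\mu=:\mu'$. Because $\mathcal X,\mathcal Y$ are compact, all transport costs and moments below are finite, and optimality of $\pi_i^*$ forces $\rho_i\ll\nu_i$, hence $D_\psi(\rho_i\|\nu_i)<\infty$.

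First I would produce, for each $i$, an admissible competitor for the UOT problem defining $W_{2,ub}^2(\mu',\nu_i)$ by transporting \emph{only} the source coordinate through $S$ while reusing the conditional kernels: let $\pi_i'$ be the law of $(S(X),Y)$ with $X\sim\mu$ and $Y\mid X\sim\gamma^*_{i,X}$. Its first marginal is $S_\#\mu=\mu'$, so $\pi_i'\in\Pi(\mu')$, and its $\mathcal Y$-marginal is unchanged (transforming the source coordinate does not affect the law of $Y$), namely $\rho_i$; hence its penalty is again $\tau D_\psi(\rho_i\|\nu_i)$, exactly the penalty borne by $\pi_i^*$. Using $\pi_i'$ as a feasible point in \eqref{eq:unbalanced} and subtracting the identity $W_{2,ub}^2(\mu,\nu_i)=\iint\tfrac12\|x-y\|^2\,\gamma^*_{i,x}(\mathrm dy)\,\mu(\mathrm dx)+\tau D_\psi(\rho_i\|\nu_i)$, the penalty terms cancel and I obtain
\[
W_{2,ub}^2(\mu',\nu_i)-W_{2,ub}^2(\mu,\nu_i)\ \le\ \iint\tfrac12\bigl(\|S(x)-y\|^2-\|x-y\|^2\bigr)\,\gamma^*_{i,x}(\mathrm dy)\,\mu(\mathrm dx).
\]

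Next I would form the $\alpha_i$-weighted sum over $i$. Expanding $\|S(x)-y\|^2-\|x-y\|^2=\|S(x)\|^2-\|x\|^2-2\langle S(x)-x,y\rangle$, integrating $y$ against $\gamma^*_{i,x}$ to replace it by $T_i(x)$, and using $\sum_i\alpha_i=1$ together with $\sum_i\alpha_i T_i(x)=S(x)$, the cross term collapses to $-2\langle S(x)-x,S(x)\rangle$ and the inner expression reduces to the pointwise identity $\sum_i\alpha_i\int(\|S(x)-y\|^2-\|x-y\|^2)\,\gamma^*_{i,x}(\mathrm dy)=-\|S(x)-x\|^2$. Integrating against $\mu$ gives
\[
V(\bar T(\mu))-V(\mu)\ \le\ -\tfrac12\int\|S(x)-x\|^2\,\mu(\mathrm dx)\ \le\ 0,
\]
which is the asserted monotonicity $V(\mu)\ge V(\bar T(\mu))$. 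For the last statement, if $\mu$ is the robust barycenter then $V(\bar T(\mu))\ge V(\mu)$ by minimality (using that $\bar T(\mu)$ is an admissible competitor); combined with the display this forces $\int\|S(x)-x\|^2\,\mu(\mathrm dx)=0$, i.e.\ $\sum_i\alpha_i T^*_{\mu\to\nu_i}(x)=x$ for $\mu$-a.e.\ $x$, which is precisely $\bar T(\mu)=\mu$.

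The arithmetic above is routine; the step that needs care — and the main obstacle — is making the objects in \eqref{averge map} rigorous: existence of an optimal UOT plan $\pi_i^*$, measurability of the disintegration $x\mapsto\gamma^*_{i,x}$ and hence of $T_i$ and $S$, and the verification that $\pi_i'$ is genuinely a positive measure in $\Pi(\mu')$ with $\rho_i\ll\nu_i$. I would settle existence and lower semicontinuity via the standard weak-compactness arguments for unbalanced OT on compact domains \citep{chizat2018unbalanced, choi2024generative}, obtain $\gamma^*_{i,x}$ from the disintegration theorem, and note once more that compactness of $\mathcal X,\mathcal Y$ removes all integrability bookkeeping. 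It is also worth flagging that the ``in particular'' clause presupposes that the robust barycenter exists, lies in $\mathcal P_{2,\mathrm{ac}}(\mathbb R^d)$, and that $\bar T$ keeps it admissible (as is checked for the iterates in \citet{alvarez2016fixed}) — which is exactly the regime in which the iteration \eqref{iterative algor} is run.
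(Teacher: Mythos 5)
Your proposal is correct and follows essentially the same route as the paper's proof: both push the optimal conditional kernels through the averaged map to obtain a feasible competitor whose Csisz\'ar penalty is unchanged (so it cancels), and both reduce the drop in the quadratic cost to exactly $\tfrac12\int\|x-\bar T(x)\|^2\,d\mu$ via the barycentric property of $\bar T$. The only difference is cosmetic — you expand $\|S(x)-y\|^2-\|x-y\|^2$ directly where the paper uses the three-term quadratic identity around $\bar T(x)$ — and your added care about the disintegration being a probability kernel and the existence/admissibility caveats for the ``in particular'' clause are sensible refinements of the same argument.
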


\begin{remark*}
The result implies that the solution to the local Fr\'echet regression, the unbalanced Wassertein barycenter, can be characterized by the fixed-point of the average mapping function. The fixed-point iteration $\mu_{n+1} = \bar{T}(\mu_n)$ monotonically decreases the objective $V(\mu)$ and converges to a local minimum. Stronger convergence guarantees, such as weak convergence of $\mu_n$ to the fixed point, are difficult to establish under general case. This is largely due to the KL divergence term in the unbalanced Wasserstein distance, which complicates proving the continuity of $V(\mu)$ under weak convergence. Nevertheless, in practice, the proposed neural solver with proper initialization often exhibits favorable convergence behavior, even in high-dimensional and complex multi-modal settings (Figure~\ref{fig:supp_pretrain}).
\end{remark*}

% \begin{theorem}
% The sequence $\{\mu_n\}$ defined in~\eqref{iterative algor} is tight. Under~\eqref{9}, every weakly convergent subsequence of $\mu_n$ must converge in $W_2$ distance to a probability in $\mathcal{P}_{2,\text{ac}}(\mathbb{R}^d)$ which is a fixed point of $G$. In particular, under~\eqref{9}, if $G$ has a unique fixed point, $\bar{\mu}$, then $\bar{\mu}$ is the barycenter of $\nu_1, \dots, \nu_k$ and $W_2(\mu_n, \bar{\mu}) \to 0$.
% \end{theorem}

% \begin{proof}[Sketch Proof]
% The uniform tightness of the sequence can be proved since we assume the space is compact. Every weakly convergent subsequence must converge to the fixed point should be proved similarly according to \citep{alvarez2016fixed}. 
% \end{proof}

\ignore{
\begin{thm}
The sequence $\{\mu_n\}$ defined by the iterative algorithm in equation (4) is tight. Under condition (2), every weakly convergent subsequence of $\mu_n$ must converge in $W_2$ distance to a probability measure in $\mathcal{P}_{2,\text{ac}}(\mathbb{R}^d)$ that is a fixed point of the map $G$. In particular, if $G$ has a unique fixed point $\bar{\mu}$, then $\bar{\mu}$ is the barycenter of $\nu_1, \dots, \nu_k$, and $W_2(\mu_n, \bar{\mu}) \to 0$.
\end{thm}

\begin{proof}[Sketch of proof]
The tightness of the sequence follows from the assumption that the space is compact. The convergence of every weakly convergent subsequence to a fixed point follows from arguments similar to those in Alvarez-Esteban et al. (2016).
\end{proof}
}

\section{Computational Algorithm Using Neural Network Modeling} 
\label{NeuralNet}

\subsection{Model reparameterization  using neural networks}

As discussed in the previous section, estimating the robust Wasserstein barycenter requires iteratively solving the unbalanced optimal transport problem between the current measure $\mu_n$ and the target measures $\nu_1, \dots, \nu_N$. However, there is no closed-form solution for the optimal coupling $\pi^*$ or the optimal conditional plan $\gamma^*_x$. To address this, we utilize a deep neural network to numerically solve the semi-dual form of the UOT problem \eqref{eq:unbalanceddual}. Following \citet{choi2024generative} and \citet{gazdieva2024robust}, we introduce a neural network $T_{\theta} : \mathcal{X} \times \mathcal{S} \to \mathcal{Y}$ parameterized by $\theta$, which approximates the convex conjugate of the potential function $v^c(x)$. Specifically, $\mathcal{S}$ denotes an auxiliary space that introduces stochasticity into the map, and $T_{\theta}$ defines a stochastic map satisfying: 
\begin{equation} T_{\theta}(x, s) \in \arg\inf_{y \in \mathcal{Y}} \left[ \frac{1}{2} |x - y|^2 - v(y) \right], \label{eq:stochasticmap} \end{equation} which yields $v^c(x) = \left[ \frac{1}{2} |x - T_\theta(x, s)|^2 - v(T_\theta(x, s)) \right]$.

By Theorem 2 of \citet{gazdieva2024robust}, the optimal conditional plan for the UOT problem is contained in the optimal saddle points, that is, $\gamma^*_x(y) \in \arg\inf_{y \in \mathcal{Y}} \left[ \frac{1}{2} |x - y|^2 - v(y) \right]$. Therefore, our stochastic map provides a valid parametrization of the optimal conditional plan.  Additionally, we introduce another neural network $v_\omega : \mathcal{Y} \to \mathbb{R}$ to parametrize the potential function $v(y)$. With these parametrizations, the semi-dual problem can be reformulated as: 
\begin{align}
\label{eq:parameterization}
W_{2,ub}^2(\mu, \nu_i)
&= \sup_{v_{\omega_i}}
\Biggl\{
  \int \inf_{T_{\theta_i}}
    \Bigl[
      \tfrac{1}{2}\lVert x - T_{\theta_i}(x, s)\rVert^2
      - v_{\omega_i}\bigl(T_{\theta_i}(x, s)\bigr)
    \Bigr] \,d\mu(x) \\
&\qquad\quad
  - \int \psi_\tau^*\bigl(-v_{\omega_i}(y)\bigr)\,d\nu_i(y)
\Biggr\}.
\nonumber
\end{align}

Finally, we introduce a neural network $G_\xi : \mathcal{Z} \to \mathbb{R}^D$ to parametrize the sequence of measures generated during the iterative algorithm \eqref{iterative algor}. Specifically, we represent the current measure as the push-forward $\mu = (G_\xi)_{\#}\mu_z$, where $\mu_z$ is a predefined latent distribution that is easy to sample from, such as the standard Gaussian $\mathcal{N}(0, I_{d_z})$ in the latent space $\mathcal{Z} \subseteq \mathbb{R}^{d_z}$.

\subsection{A fixed-point algorithm for regression estimation}
\label{sec:fixedpoint}

Equipped with the above neural network parameterizations, we are ready to implement the fixed-point iterative algorithm \eqref{iterative algor}, presented as Algorithm \eqref{alg:fixed_point}. Following \citet{korotin2022wasserstein}, we approximate the average map $\bar{T}$ by regressing $(G_\xi){\#} \mu_z$ onto $\bar{T}((G_{\xi_0}){\#} \mu_z)$, where $G_{\xi_0}$ is a fixed copy of $G_\xi$. The pushforward \( T_i^*(x) = \mathbb{E}_{\gamma_{i,x}^*}[y] \) is approximated by averaging over the latent variable \( s \) through the stochastic map \( T_\theta(x, s) \). Specifically, we update the generator parameter $\xi$ by minimizing the following loss:
\[
\int_{\mathcal{Z}} \ell \left( G_\xi(z), \sum_{i=1}^N \alpha_i T_{\theta_i}\left( G_{\xi_0}(z) \right) \right) d\mu_z,
\]
where $\ell$ denotes a properly chosen loss function, such as the mean squared error (MSE).

In practice, since the true measure is not accessible, we approximate the integral using Monte Carlo sampling. All networks, including $T_{\theta_n}, v_{\theta_n},$ and $G_\xi$, are implemented as simple fully connected neural network layers. The loss functions are optimized using the stochastic gradient descent algorithm Adam. Throughout this paper, we implement $T_\theta(x, s)$ as a deterministic map $T_\theta(x)$, which empirically already yields satisfactory performance.

\begin{algorithm}[h!]
\caption{Fixed point algorithm for unbalanced barycenter}
\label{alg:fixed_point}
\KwIn{Latent $\mathcal{Z} \subseteq \mathbb{R}^{d_z}$; Auxiliary $\mathcal{S} \subseteq \mathbb{R}^{d_s}$; Empirical measures $\widehat{\mathbb{P}}_1, \dots, \widehat{\mathbb{P}}_N \in \mathbb{R}^D$; weights $\alpha_1, \dots, \alpha_N > 0$ $(\sum_{n=1}^N \alpha_n = 1)$; generator $G_\xi : \mathbb{R}^{d_z} \to \mathbb{R}^D$; mapping networks $T_{\theta_1}, \dots, T_{\theta_N}: \mathbb{R}^D \to \mathbb{R}^D$; potentials $\nu_{\omega_1}, \dots, \nu_{\omega_N}: \mathbb{R}^D \to \mathbb{R}$; regression loss $\ell : \mathbb{R}^D \times \mathbb{R}^D \to \mathbb{R}^+$; unbalance tolerance $\tau$; convex conjugate of the divergence function $\Psi_\tau^*: \mathbb{R} \to \mathbb{R}$; number of iterations per network: $K_G, K_T, K_\nu$; Batch size $B$.}
\KwOut{Generator satisfying $G_\xi \sharp \mathcal{S} \approx \bar{\mathbb{P}}$; OT maps satisfying $T_{\theta_n} \sharp(G_\xi \sharp \mathcal{Z}) \approx \mathbb{P}_n$.}

\Repeat{converged}{
    \#\textit{ Update OT maps} $\&$ \textit{potentials}
    
    \For{$k_\nu = 1, 2, \dots, K_\nu$}{
        \For{$k_T = 1, 2, \dots, K_T$}{
            Sample batches $Z \sim \mathcal{Z}; s\sim \mathcal{S}; X \gets G_\xi(Z)$ \;
            \#\textit{ Joint training} 
            
            $\mathcal{L}_T \gets \frac{1}{N} \sum_{i=1}^N \left\{  \frac{1}{|X|} \sum\limits_{x \in X} \left[ \frac{1}{2} \| x - T_{\theta_i}(x,s) \|^2 - \nu_{\omega_n}(T_{\theta_i}(x,s)) \right]\right\}$\;
            Update $\theta_1, ..., \theta_n$ by descending $\mathcal{L}_T$\;
        }
        Sample batches $Z \sim \mathcal{Z}; s\sim \mathcal{S}; X \gets G_\xi(Z); Y_i \sim \widehat{\mathbb{P}}_i, i = 1,\dots,N$\;
        \ignore{
        $\mathcal{L}_\nu \gets \frac{1}{N} \sum\limits_{i=1}^N \left[\frac{1}{|X|} \sum\limits_{x \in X} \psi^*_\tau\left(-\| x - T_{\theta_i}(x,s) \|^2+\nu_{\omega_i}(T_{\theta_i}(x,s))\right) - \frac{1}{|Y|} \sum\limits_{y_i \in Y_i} \nu_{\omega_i}(y_i)\right]$\;}

        $\mathcal{L}_\nu \gets \frac{1}{N} \sum\limits_{i=1}^N \left[\frac{1}{|X|} \sum\limits_{x \in X} \nu_{\omega_i}(T_{\theta_i}(x,s)) + \frac{1}{|Y|} \sum\limits_{y_i \in Y_i} \psi^*_\tau\left(-\nu_{\omega_i}(y_i)\right)\right]$\;
        
        Update $\nu_{\omega_1}, \dots, \nu_{\omega_N}$ by descending $\mathcal{L}_\nu$\;
    }

    \#\textit{ Update the generator}
    
    \For{$k_G = 1, 2, \dots, K_G$}{
        Sample batch $Z \sim \mathcal{Z}$\;
        $\mathcal{L}_G \gets \frac{1}{|Z|} \sum_{z \in Z} \ell \left(G_\xi(z), \sum_{n=1}^N \alpha_i T_{\theta_i}(G_{\xi_0}(z))\right)$\;
        Update $\xi$ by descending $\mathcal{L}_G$\;
    }
}
\end{algorithm}

Compared to the algorithm for computing the unbalanced Wasserstein barycenter proposed in \citet{gazdieva2024robust}, our fixed-point algorithm—though requiring tri-level optimization and being less efficient when the number of distributions is small—enjoys two key advantages: (1) the computational complexity of our algorithm scales linearly with the number of distributions $N$, whereas the method in \cite{gazdieva2024robust} scales quadratically in $N$ due to the congruence condition. This gives our approach an advantage when dealing with large numbers of distributions. (2) The fixed-point algorithm provides an explicit generator $G_\xi$ for the barycenter, enabling direct sampling of new points from the barycenter. In contrast, the method in \cite{gazdieva2024robust} relies on sampling from the target distributions and mapping the samples back to the barycenter.

\subsection{Addressing the mode-collapse in complex datasets}

We observe that in complex single-cell datasets, gene expression often exhibits a strongly multimodal pattern, which leads to significant convergence issues for the fixed-point approach. Although UOT mitigates this problem by learning a more conservative map, relying solely on UOT to address convergence can result in substantial loss of accuracy in recovering the true barycenter. In particular, setting $\tau \to 0$ guarantees convergence but leads to a trivial solution — the identity map.

Previous studies \citep{arjovsky2017wasserstein, nowozin2016f} have drawn connections between generative adversarial networks (GANs) and the dual formulations of both classical and unbalanced optimal transport problems. Given the similarity between the training objectives of GANs and UOT, we interpret the potential function in our framework as analogous to the discriminator in GANs. Mode collapse, a well-known phenomenon where the generator fails to capture all data modes, has attracted extensive research. Several explanations have been proposed, including large gradients in the discriminator \citep{kodali2017convergence}, generator instability \citep{arjovsky2017towards}, and catastrophic forgetting \citep{thanh2020catastrophic}. Drawing on this analogy, we suspect that our framework is also prone to mode collapse. Specifically, we observe that when the generator fails to adequately cover all modes of the data, the potential functions will continually increase the "scores" assigned to the unreached modes. As a result, the learned transport map fails to represent the true barycenter meaningfully.

To address the mode collapse issue, we propose to pretrain the generator $G_\xi$ to approximate the empirical distribution of the entire dataset before estimating the barycenter. This step ensures that the generator captures the diverse modalities inherent in complex single-cell datasets, which is critical for improving the convergence behavior of the fixed-point algorithm. Compared to GAN-based training algorithms, machine learning methods such as normalizing flows and diffusion models are known for their effectiveness in capturing multiple modes within a dataset \citep{xiao2021tackling}.
In this paper, we adopt the normalizing flow variational autoencoder (VAE-NF) framework, as originally introduced by \cite{rezende2015variational}, to pretrain our generator $G_\xi$. Specifically, the generator is implemented as the decoder in the VAE architecture, and we optimize the model by minimizing the flow-based free energy bound computed from the flow. This allows the generator to effectively capture all data modes when sampling from a standard Gaussian latent space. Additionally, we incorporate $\beta$-annealing, gradually increasing the weight of the KL divergence term during training to further improve performance of pretraining.  The detailed pre-training algorithm is presented in Algorithm \ref{alg:VAE-NF}.

\begin{algorithm}[h!]
\caption{Pre-training of the generator with VAE-NF}
\label{alg:VAE-NF}
\KwIn{
    Latent space $\mathcal{Z} \subseteq \mathbb{R}^{d_Z}$; Empirical measures $\widehat{\mathbb{P}}_1, \dots, \widehat{\mathbb{P}}_N \in \mathbb{R}^D$; 
    Encoder $E_\phi: \mathbb{R}^D \rightarrow \mathbb{R}^{d_Z}$;
    Decoder $G_\xi: \mathbb{R}^{d_Z} \rightarrow \mathbb{R}^{D}$; 
    Normalizing flow $f_K \circ f_{K-1} \circ \dots \circ f_1$, where $f_k: \mathbb{R}^{d_Z} \to \mathbb{R}^{d_Z}$; Beta coefficient $C_\beta$; Number of epochs $N$; Batch size $B$. }
\KwOut{Pretrained generator $G_\xi$}

\BlankLine
\textbf{Step 1: Pooling.} Form $\mathbb{P}_{data} = \frac{1}{N} \sum_{n=1}^N \widehat{\mathbb{P}}_n$\;

\textbf{Step 2: Training Loop} \\
\For{$epoch = 1, 2, \dots, N$}{
    Update $\beta \leftarrow \frac{epoch}{C_\beta \cdot N}$; Sample mini-batch $X \sim \mathbb{P}_{data}$\;
    Encode: $\mu, \sigma \leftarrow E_\phi(X)$; \\
    Sample latent: $Z_0 \sim \mu + \sigma \cdot \epsilon$, $\epsilon \sim \mathcal{N}(0, I)$; \\
    Flow transform: $Z_K, \log \left.\bigl|\det\!\bigl(\tfrac{\partial f_k}{\partial \mathbf{Z}}\bigr)\bigr|\right|_{\mathbf{Z} = \mathbf{Z}_k} \leftarrow f_K \circ f_{K-1} \circ \dots \circ f_1(Z_0)$; \\
    Decode: $\hat{X} \leftarrow G_\xi(Z_K)$;

    Compute reconstruction loss: 
    \[
    \mathcal{L}_{recon} = \frac{1}{B} \sum_{i=1}^B \| X^{(i)} - \hat{X}^{(i)} \|^2
    \]

    Compute KL divergence via flow-based free energy bound:
    \[
    \mathcal{L}_{kl} = \frac{1}{B} \sum_{i=1}^B \left( 
        -\frac{1}{2} \frac{\|Z_0^{(i)}-\mu^{(i)}\|^2}{\left(\sigma^{(i)}\right)^2} - \frac{S}{2}\log\sigma^{(i)}
        +\frac{1}{2} \|Z_K^{(i)}\|^2 + \sum_{k=1}^K \log \left.\bigl|\det\!\bigl(\tfrac{\partial f_k}{\partial \mathbf{Z}}\bigr)\bigr|\right|_{\mathbf{Z} = \mathbf{Z}_k^{(i)}} 
    \right)
    \]

    Total loss: $\mathcal{L} = \mathcal{L}_{recon} + \beta \mathcal{L}_{kl}$\; Update $\phi, \xi$ by descending the total loss $\mathcal{L}$;
}
\Return{Pretained generator $G_\xi$.}
\end{algorithm}

Pretraining the generator significantly improves the convergence behavior of the fixed-point algorithm in complex multimodal settings. We illustrate this using a COVID-19 scRNA-seq dataset (see Supplement~\ref{supp:COVID}) in Figure~\ref{fig:supp_pretrain}, where the goal is to compute the unbalanced barycenter of COVID-19 subjects. We compare two initialization strategies: the standard approach, where the generator is initialized to mimic a Gaussian, and our proposed method using VAE-NF pretraining, which approximates the empirical distribution of the whole dataset.

The loss curves in the top panel of Figure~\ref{fig:supp_pretrain} show that VAE-NF pretraining leads to stable convergence of the generator during fixed-point updates. In the bottom row, the learned barycenter under VAE-NF initialization captures the underlying structure of the COVID-19 cell population accurately. In contrast, the model with standard initialization fails to converge, resulting in a generator that poorly represents the data.

\subsection{Estimating transport maps}
\label{sec:transportmap}

After performing distributional interpolation using robust Fr\'echet regression, we can reconstruct cellular developmental trajectories or identify signature genes that drive changes in cell abundance or transcriptomic profiles. To reconstruct such trajectories, we need to estimate the transport plan between adjacent distributions. Instead of using the input convex network \citep{amos2017input}, we approach this by solving the unbalanced optimal transport semi-dual problem~\eqref{eq:unbalanceddual}, using the same parameterization technique introduced in Equation~\eqref{eq:parameterization}. The resulting neural network $T_{\theta}(x,s)$ represents the optimal transport plan between distributions at different time points. Since we implement it as a deterministic map in this paper, $T_{\theta}(x)$ is automatically a transport map. Given the inherent noise in single-cell datasets, the unbalanced optimal transport formulation offers improved robustness to outliers, allowing for the extraction of clearer biological signals compared to its balanced counterpart.

\section{Simulation Study}
\ignore{
We highlight two key advantages of our algorithm over traditional approaches. First, single-cell datasets often contain outlier cells—even after quality control—which can degrade the quality of learned transport maps. Our robust local Fréchet regression framework is inherently resistant to such outliers, enhancing its reliability for single-cell analysis. Second, the pretraining of the generator improves the convergence behavior of the fixed-point updates, especially in complex multimodal settings. In this section, we provide empirical evidence using both synthetic and real datasets to demonstrate these advantages.

\subsection{UOT increase the robustness}
}

We demonstrate the robustness of our proposed robust local Fr\'echet regression by comparing it with its balanced OT counterpart under a synthetic dataset. Single-cell data often contain outlier cells, even after quality control, which can severely distort transport-based representations. 
To evaluate this, we construct synthetic datasets based on ten-dimensional Gaussian mixtures, where each component mimics a cell type in the scRNA-seq dataset. The mixture consists of four central components representing major cell types (comprising 95\% of the total mass), and four peripheral components simulating outlier cells (5\% of the mass). Further details of the simulation can be found in Supplement~\ref{supp:simulation}. 

Our approach leverages unbalanced optimal transport  to mitigate the sensitivity to these outliers, offering a more reliable estimate of the underlying population structure.
Figure~\ref{fig:simu_outlier} compares the barycenters computed by our unbalanced method (left) and the classical balanced barycenter (right). The unbalanced barycenter faithfully recovers the structure of the four main modes while being robust to the outliers. In contrast, the classical barycenter is more dispersed and distorted, as it is pulled toward the outlier components.

We also compare the learned transport plans in both settings (Figure~\ref{fig:supp_simu}). In the unbalanced case, transport plans successfully captured mapping relationship between barycenter and marginal distribution, yielding accurate reconstructions. In contrast, the transport in the balanced case is heavily influenced by outliers, resulting in distorted mappings and a poor recovery of the underlying distribution.

\begin{figure}[H]
    \centering
    \includegraphics[width=0.9\linewidth]{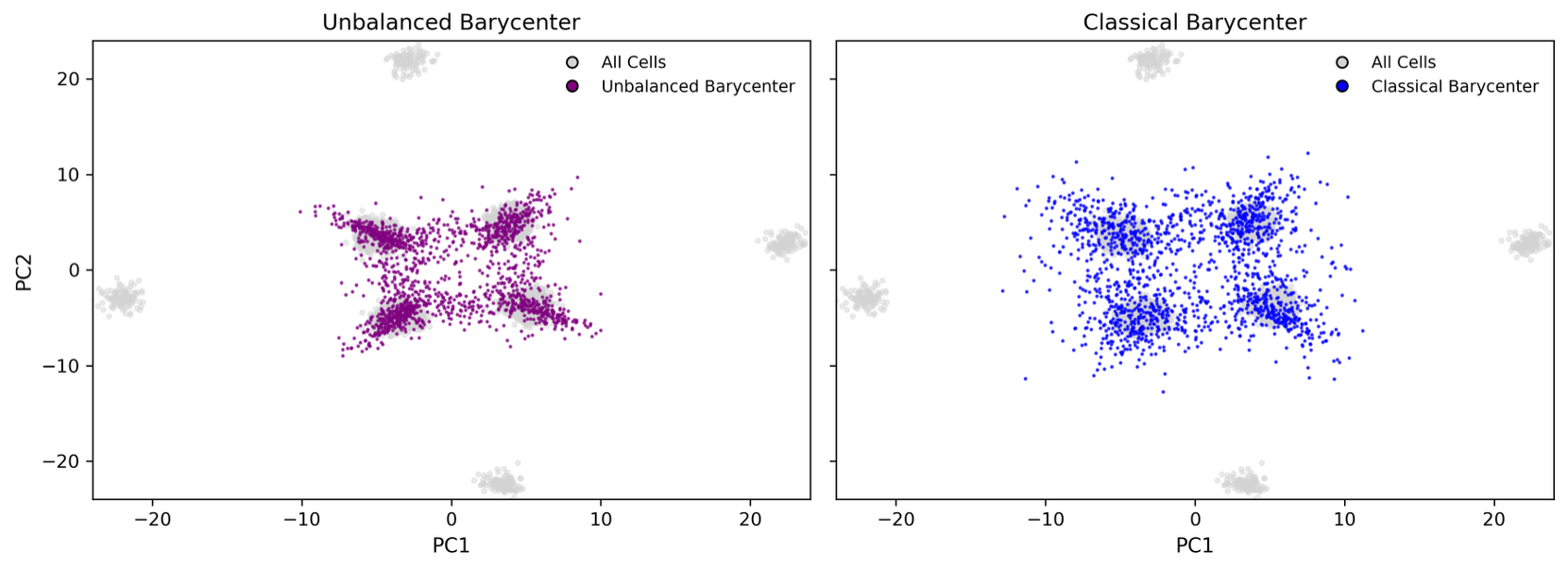}
    \caption{Comparison between the unbalanced OT and classical OT barycenters in the presence of outliers, visualized using the top two principal components. Grey points represent pooled samples from ten Gaussian mixtures.}
    \label{fig:simu_outlier}
\end{figure}

\section{Applications To Cell Dynamic Analysis Based on scRAN-seq Data} \label{Realdata}

We demonstrate the effectiveness of our robust local Fr\'echet regression framework for analyzing cellular dynamics using three temporal scRNA-seq datasets. Our method enables interpolation of cellular distributions at any time point based on the scRNA-seq data observed at  a few time points, facilitating the comparison of the cell distributional changes over time. The resulting interpolated distributions across additional time points can also be leveraged to improve the temporal resolution of reconstructed cellular dynamics, thereby facilitating the identification of distinct differentiation pathways and signature genes.\\

\noindent {\bf Embryoid data - human embryonic stem cell differentiation dataset} \citep{moon2019embryoid}. 
The dataset profiles low-passage H1 human embryonic stem cells (hESCs) undergoing differentiation as embryoid bodies (EBs) over a 27-day time course. Approximately 31,000 cells are profiled, and 16,825 cell measurements are left for analysis after preprocessing. Samples are collected at 3-day intervals and pooled for measurement on the 10x Genomics Chromium platform, resulting in a dataset comprising five time points: "Day00-03", "Day06-09", "Day12-15", "Day18-21", and "Day24-27" ($N=5$).  We extract the top PCs for our local Fr\'echet analysis following the preprocessing steps in \citet{huguet2022manifold}.

\medskip \noindent {\bf Statefate data - mouse hematopoietic stem and progenitor cell differentiation  dataset} \citep{weinreb2020statefate}. The dataset profiles mouse hematopoietic stem and progenitor cells (HSPCs) undergoing in vitro differentiation. Approximately 130,000 single-cell transcriptomes are collected at three time points: Day 2 (50\%), Day 4 (30\%), and Day 6 (remaining cells). Single-cell RNA sequencing combined with genetic barcoding enables simultaneous profiling of cell states and clonal fates. We use the preprocessed data with extracted  PCs following the steps described in \citet{bunne2023learning}.

\medskip\noindent {\bf Reprogramming data - mouse embryonic fibroblasts reprogramming dataset} \citep{schiebinger2019optimal}. The dataset profiles the reprogramming of mouse embryonic fibroblasts (MEFs) into induced pluripotent stem cells (iPSCs) over an 18-day time course. A total of 259,155 single-cell transcriptomes are collected across 39 time points and sampled every 12 hours (and every 6 hours between days 8 and 9). Single-cell RNA sequencing is performed using the 10x Genomics Chromium platform. We download the processed and log(TPM+1)-transformed dataset from \url{https://portals.broadinstitute.org/single_cell}. We select the top highly variable genes (HVGs) and computed the principal components  for analysis.

\subsection{Implementation details}
\label{sec:implementation}

Guided by the sensitivity analysis in Section~\ref{sec:sensitivity}, in our benchmark experiments on temporal datasets, we set: (1) the bandwidth $h = 3$ for the Embryoid dataset, $h = 2$ for the Statefate dataset, and $h = 0.5$ for the Reprogramming dataset; (2) the unbalanced tolerance $\tau = 5$ for the Embryoid and Reprogramming datasets, and $\tau = 1$ for the Statefate dataset. For the Reprogramming dataset, due to the large number of time points, we keep only those distributions with weights greater than 0.01. This not only reduces computational burden but also improves prediction accuracy, as including time points far from the target may degrade the performance.

In our benchmark experiment, we uses fully connected neural networks with 4 hidden layers and width 196 for $T_{\theta_n}$ $v_{\omega_n}$. a learning rate 0.0003. They are optimized with Adam (learning rate = 0.0003, weight decay = 1e-10) and a batch size of 64. The generator $G_\xi$ is implemented as a 4-layer fully connected network with 256 hidden units per layer, optimized using Adam (learning rate = 0.0001, weight decay = 1e-8) with a batch size of 128. The number of iterations for each update step, as defined in Algorithm~\ref{alg:fixed_point}, is $K_G=50$, $K_T=10$, and $K_v=50$. We train the UOT/OT models for 25, 35, and 25 epochs on the Embryoid, Statefate, and Reprogramming datasets, respectively. We train all models in the sensitivity analysis by 25 epochs.

We follow the ``hold-out" evaluation used in MioFlow~\citep{huguet2022manifold} to ensure a fair comparison. Specifically, we use their geodesic autoencoder (GAE) with $\alpha$-decay distance to embed the PC data. The encoder architecture is set to [features\_dim, 200, 200, gae\_embedded\_dim], while the neural network learning the flow trajectory has three hidden layers of size [64,64,64]. All other hyperparameters follow the default settings in their python tutorial \url{https://github.com/KrishnaswamyLab/MIOFlow/blob/main/notebooks/%5BTutorial%5D%20EB-Hold-out.ipynb}. Similarly, for the Reprogramming dataset, we use only the adjacent time points, $t-2$, $t-1$, $t+1$, and $t+2$, to predict the distribution at time $t$, as including too many time points can negatively impact the performance.

Finally, to estimate the transport plans,  we use 4-layer fully connected neural networks with width 196 to parametrize the transport plans and potential functions. The bi-level min-max optimization problem is solved using the first step of Algorithm~\ref{alg:fixed_point}. To ensure accuracy, we set $K_v=500$ and $K_T=100$ with a batch size of 128. In all applications, we set the unbalanced tolerance parameter $\tau=5$.

\subsection{Benchmark of distributional interpolations}

We first evaluate the performance of robust local Fr\'echet regression in accurately interpolating cellular distributions in a 20-dimensional PC space using the three temporal scRNA-seq datasets listed above.  Specifically, we hold out cells from one time point and predict the distribution of that time point using cells from the remaining time points.  We compute three distance-based metrics that quantify discrepancies between the predicted and ground truth distributions: maximum mean discrepancy (MMD), Earth Mover's Distance (EMD, equivalent to Wasserstein-1), and Wasserstein-2 distance. 

We benchmark our method to Mioflow \citep{huguet2022manifold}, which has demonstrated good performance in interpolating cellular trajectories. Additionally, we compare our method against two baseline methods and the classical optimal transport (OT) formulation of our algorithm, which corresponds to the case where the unbalanced tolerance $\tau \to \infty$. The first baseline method estimates the distribution at time $t$ by taking the midpoint between the cells at times $t-1$ and $t+1$. The second baseline method does not perform interpolation but instead reports, for each metric, the average of the distances between the ground truth distribution at time $t$ and its two adjacent time points. It reflects the error when directly using adjacent time point to predict time $t$. More implementation details can be found at Section~\ref{sec:implementation}.

Tables~\ref{benchmark:embryoid} presents the leave-one-out benchmark results across selected time points for the  three datasets, respectively.  For the  Embryoid dataset, the robust local Fr\'echet regression (UOT) and its classical OT formulation achieve comparable performance and consistently outperform MioFlow and the two baseline methods. Prediction errors of Mioflow increase at later time points, likely due to error accumulation when inferring the entire trajectory sequentially. The unbalanced version (UOT) outperforms the OT formulation at $t = 7.5$, but performs slightly worse at $t = 13.5$ and $t = 19.5$. This aligns with the observation that $t = 7.5$ is farther from its adjacent time points (as reflected by baseline method 2), making distribution prediction more challenging. In such cases, the unbalanced formulation provides an advantage by producing more robust and conservative estimates. 

For  the Statefate and the Reprogramming datsets, the robust local Fr\'echet regression consistently outperforms other methods. Overall, we observe that UOT-based regression estimates the cell distributions better than using the standard OT. 

We also visualize representative held-out time points from each dataset in Figure~\ref{fig:benchmark}. Local regression with both the UOT and OT methods accurately capture the finer structures of the true distributions, with UOT performing particularly well on the Statefate dataset. In contrast, while Mioflow and the baseline methods can successfully capture the main mode of the data, they are less accurate in preserving finer distributional details. Supplementary Figure~\ref{fig:supp1} and Figure~\ref{fig:supp2} present the full visualizations for the Embryoid and Reprogramming datasets, respectively. We observe that UOT and its balanced OT counterpart consistently outperform the other methods, while the UOT formulation exhibits increased robustness to outliers, particularly in the Reprogramming dataset.

\begin{table}[h!]
    \caption{Benchmark results on the 20-dimensional Embryoid dataset at $t = 7.5$ (Day 6–9), $t = 13.5$ (Day 12–15), and $t = 19.5$ (Day 18–21), Statefate data at $t=4$, and the Reprogramming data at $t=10, 13.5$ and $16.5$, 
    comparing interpolation accuracy across methods. Metrics include Maximum Mean Discrepancy (MMD), Earth Mover’s Distance (EMD), and Wasserstein-2 ($W_2$) distance, averaged over 100 repetitions.}
    \vspace{5pt}
    \label{benchmark:embryoid}
    \centering
    \begin{tabular}{l|ccc|ccc|ccc}
         \toprule
         \multicolumn{1}{c}{} & 
 \multicolumn{9}{c}{Embryoid data set}\\
        & \multicolumn{3}{c|}{$t=7.5$} & \multicolumn{3}{c|}{$t=13.5$} & \multicolumn{3}{c}{$t=19.5$} \\
        \cmidrule(lr){2-4} \cmidrule(lr){5-7} \cmidrule(lr){8-10}
        &  MMD(G) & EMD & $W_2$ & MMD(G) & EMD & $W_2$ & MMD(G) & EMD & $W_2$ \\
        \midrule
        Baseline 1  & 0.381 & 5.370 & 5.558 & 0.257 & 5.353 & 5.607 & 0.264 & 5.563 & 5.810 \\
        Baseline 2  & 0.483 & 6.482 & 7.024 & 0.203 & 5.922 & 6.350 & 0.183 & 5.958 & 6.237 \\
        Mioflow     & \textbf{0.257} & 5.332 & 5.514 & 0.158 & 6.028 & 6.520 & 0.156 & 6.900 & 7.310 \\
        OT          & 0.280 & 5.260 & 5.464 & 0.114 & \textbf{5.076} & \textbf{5.319} & 0.153 & \textbf{5.390} & \textbf{5.629} \\
        UOT         & 0.281 & \textbf{5.046} & \textbf{5.223} & \textbf{0.111} & 5.121 & 5.424 & \textbf{0.135} & 5.433 & 5.717 \\
        \bottomrule
        \multicolumn{1}{c}{} &
        \multicolumn{3}{c}{Statefate data}\\
        %\toprule
        & \multicolumn{3}{c}{$t=4$} \\
        \cmidrule(lr){2-4}  
        &  MMD(G) & EMD & $W_2$ \\
        \midrule
        Baseline 1     & 0.235 & 6.743 & 7.138 \\
        Baseline 2     & 0.297 & 7.977 & 8.934 \\
        Mioflow        & 0.200 & 7.443 & 7.934 \\
        OT             & 0.116 & 6.219 & 6.622 \\
        UOT            & \textbf{0.086} & \textbf{5.672} & \textbf{6.114} \\
        \bottomrule
        \multicolumn{1}{c}{} &
        \multicolumn{9}{c}{Reprogramming data}\\
          %  \toprule
            & \multicolumn{3}{c|}{$t=10$} & \multicolumn{3}{c|}{$t=13.5$} & \multicolumn{3}{c}{$t=16.5$} \\
            \cmidrule(lr){2-4} \cmidrule(lr){5-7} \cmidrule(lr){8-10}
            & MMD(G) & EMD & $W_2$ 
            & MMD(G) & EMD & $W_2$ 
            & MMD(G) & EMD & $W_2$ \\
            \midrule
            Baseline 1   & 0.224 & 4.879 & 5.033  & 0.259 & 6.258 & 6.500  & 0.198 & 5.203 & 5.814 \\
            Baseline 2   & 0.128 & 5.059 & 5.307  & 0.118 & 5.612 & 6.002  & 0.064 & 4.324 & 4.933 \\
            Mioflow     & 0.248 & 5.324 & 5.585  & 0.064 & 5.699 & 5.955  & \textbf{0.033} & 4.605 & 5.082 \\
            OT          & 0.092 & 4.446 & 4.621  & 0.087 & 4.950 & 5.170  & 0.051 & 4.017 & 4.552 \\
            UOT         & \textbf{0.079} & \textbf{4.389} & \textbf{4.559}  & \textbf{0.086} & \textbf{4.868} & \textbf{5.100}  & 0.044 & \textbf{3.948} & \textbf{4.481} \\
            \bottomrule
        \end{tabular}
\end{table}

\begin{figure}[h]
    \centering
    \includegraphics[width=\linewidth]{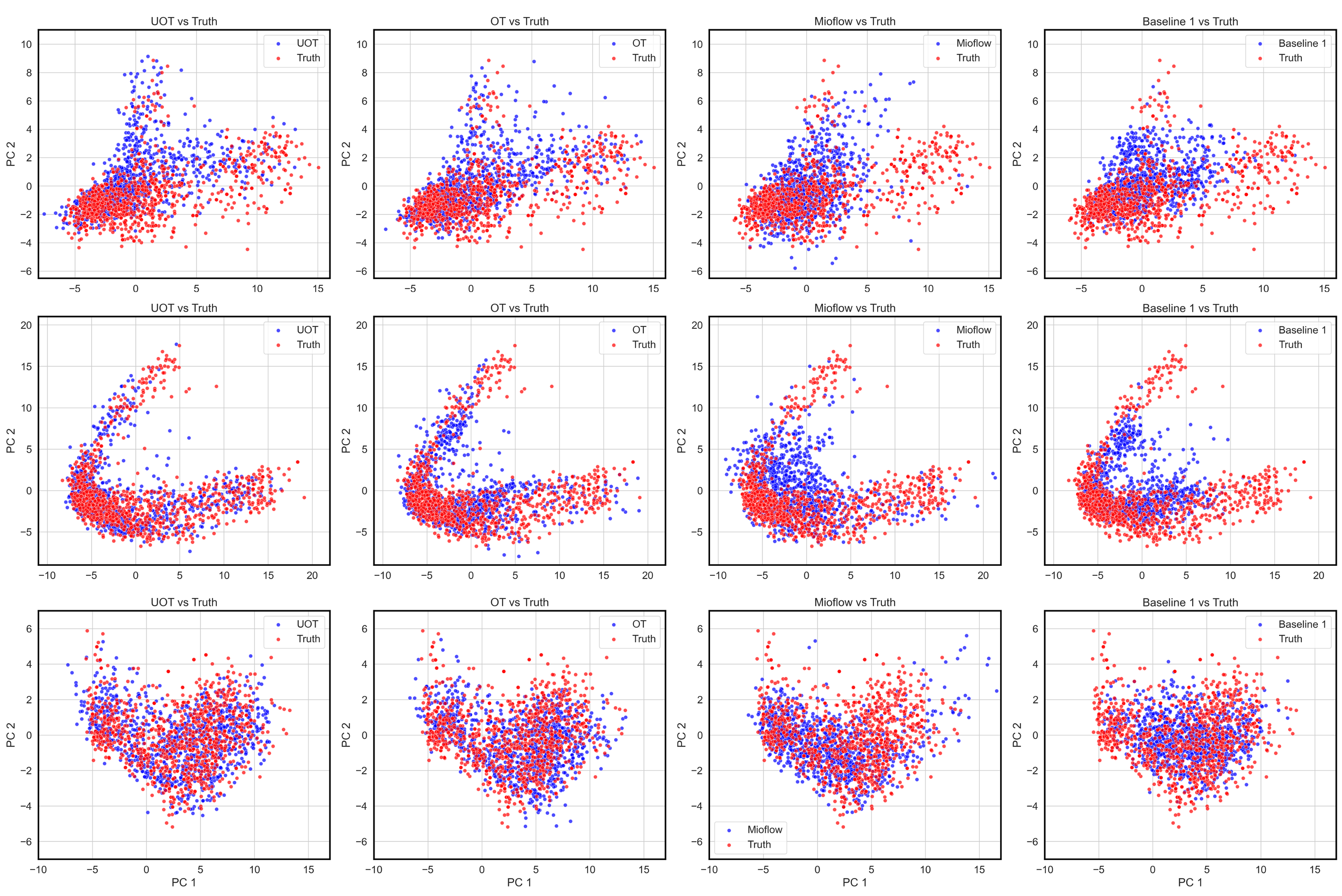}
    \caption{Comparison of the predicted cellular distributions (blue) against the ground truth (red) visualized on the top two principal components (PC1 and PC2) across different methods and datasets. Visualizations are shown for representative held-out time points in each dataset. \textbf{First row:} Embryoid dataset at \(t = 13.5\); \textbf{Second row:} Statefate dataset at \(t = 4\); \textbf{Third row:} Reprogramming dataset at \(t = 10\).}
    \label{fig:benchmark}
\end{figure}

Figure~\ref{fig:trajectories}b illustrates two representative interpolations using our method on the Embryoid dataset. In both cases, the interpolated distribution is a smooth  transition between adjacent time points. This demonstrates the model’s ability to capture gradual distributional shifts over time and to interpolate realistic intermediate transcriptomics states.

\subsection{Recovering time trajectory of individual cells}
\label{sec:recovertraj}

Inferring cellular trajectories in temporal biological processes from initial transcriptomic states is of great interest in scRNA-seq studies. While the primary goal of the robust local Fr\'echet regression is to interpolate cellular distributions, the model also allows to reconstruct the underlying cellular trajectory by learning and composing transport maps between the  distributions, see section~\ref{sec:transportmap} for details.

\begin{figure}[h]
    \centering
    \includegraphics[width=\linewidth]{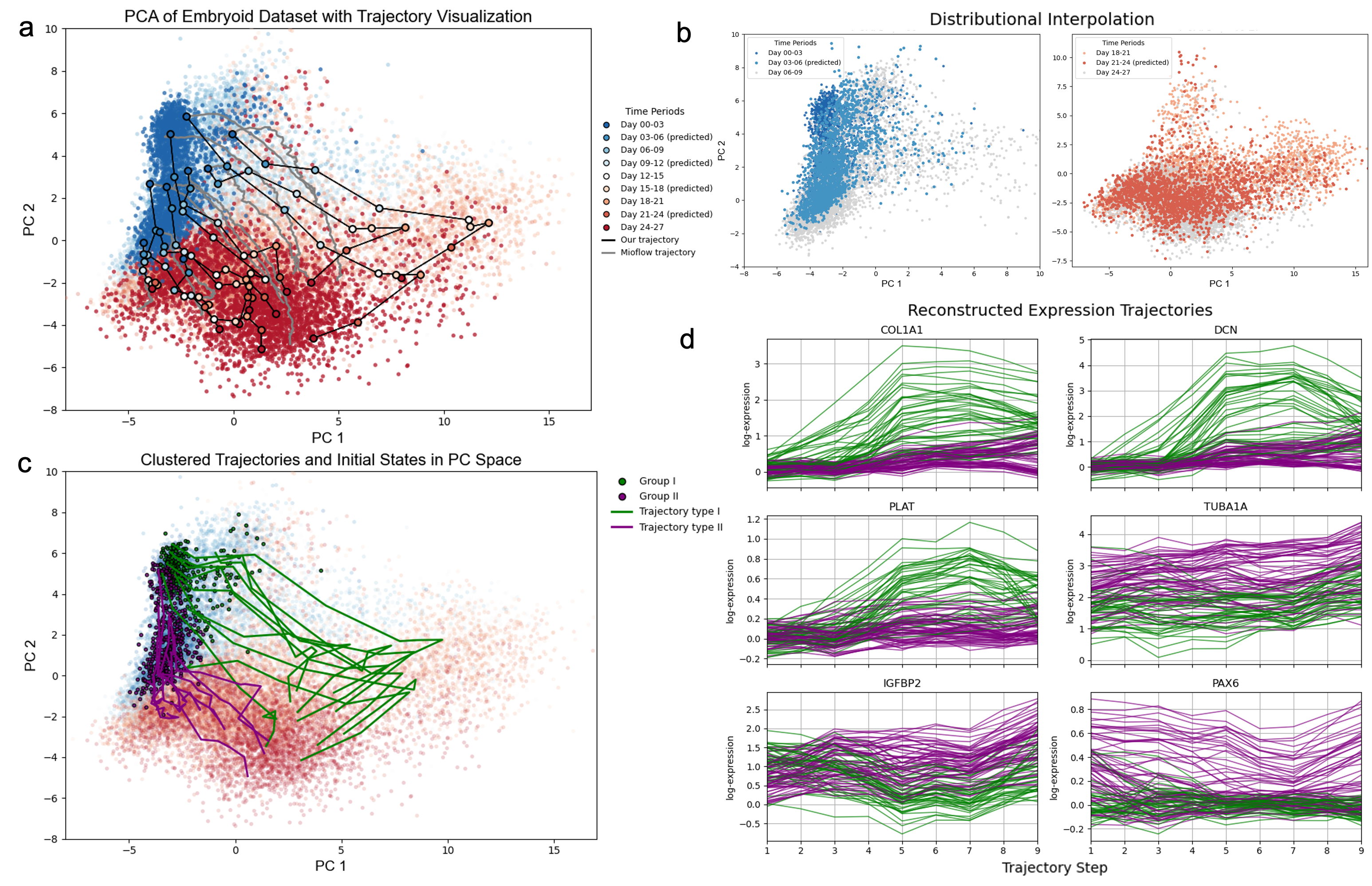}
    \caption{Distributional interpolation and trajectory reconstruction in the Embryoid dataset, visualized using the top two principal components (PC1 and PC2). \textbf{a,} Reconstructed trajectories from Day 00–03 to Day 24–27. Black lines trace our predicted trajectories; Each dot on them represents the predicted position of a cell in transcriptomic space at a given time point. The grey lines correspond to Mioflow predictions for the same cells.  \textbf{b,} Visualization of the interpolated distribution on Day 03-06 and Day 21-24. \textbf{c,} Two differentiation fates identified via Dynamic Time Warping (DTW) clustering. Initial states are colored by group (Group I: green; Group II: purple). \textbf{d,} Reconstructed gene expression trajectories for genes showing significant differences in intercept or slope across trajectory types.}
    \label{fig:trajectories}
\end{figure}

We show a few representative trajectories inferred by our method and compare them to those inferred by Mioflow  for the Embryoid dataset (Figure~\ref{fig:trajectories}a). Both methods capture similar global patterns of cellular developmental dynamics. Mioflow  maintains a smoother trajectory across time points. However, our method aligns with the observed and predicted distributions more precisely, effectively capturing finer structures and subtle transitions within the data. This suggests that while Mioflow provides a continuous trjectory  estimate, our method enhances fidelity in reconstructing the temporal dynamics of cellular states.

\subsection{Identifications of Trajectory associated signature genes}
\label{sec:trajanalysis}

After obtaining the reconstructed trajectories, we conduct downstream analyses to uncover distinct trajectory patterns and their associated genes. For each individual cell, the learned transport maps can be applied to predict its dynamic behavior over time. To identify distinct differentiation pathways, we calculate pairwise similarities between trajectories based on dynamic time wrapping (DTW) \citep{sakoe2003dynamic}, followed by clustering based on the resulting distance matrix.

To enable gene-level interpretation, we map the data from the PC space back to the original gene expression space. Specifically, the approximate reconstruction of gene expression is given by: $\hat{X} = ZW^T$, where $W\in\mathbb{R}^{p\times d}$ is the loading matrix and $Z\in\mathbb{R}^{n\times d}$ is the PC matrix. This linear reconstruction captures the gene expression encoded in the top $d$ principal components. Since PCA in Scanpy is internally centered, we add back the mean gene expression values to the reconstructed profiles to obtain estimates on the original scale.

To analyze gene expression trajectories, we fit a linear mixed effects model (LMM) for each gene, assessing whether expression patterns differ across trajectory types. The model includes a fixed effect for group (trajectory type), a continuous covariate for time, and their interaction, along with a random intercept for each trajectory cluster. Formally, we fit the following model:
\[
\texttt{Expr} \sim \texttt{Group} \times \texttt{Time} + (1|\texttt{traj\_id})
\]
where \texttt{Expr} denotes reconstructed gene expression, \texttt{Time\_c} is the centered time variable, and \texttt{traj\_id} specifies the each trajectory. This formulation allows us to test for both group-specific baseline expression and time-dependent effects.

We  quantify  the similarity between inferred trajectories using dynamic time warping and identify  two primary trajectory clusters, as illustrated in Figure~\ref{fig:trajectories}c for the Embryoid dataset. The linear mixed-effects modeling identifies genes with significant expression differences between two trajectory clusters (selected genes are shown  in Fig.\ref{fig:trajectories}d). Notably, genes COL1A1, DCN, and LUM are enriched in the first trajectory cluster and remain stably low in the second, all three are canonical markers of mesenchymal/mesodermal derivatives \citep{chen2021type, zoeller2009central, farkas2023derivation}. In contrast, trajectory 2 preferentially expresses genes PAX6, IGFBP2, and TUBA1A, a combination characteristic of neuroectodermal or neural-crest differentiation \citep{zhang2010pax6, shahin2023redox, aiken2017alpha}. These signature genes support the interpretation of  trajectory type I as a mesodermal branch and trajectory type II as a neuroectodermal branch.

 To identify early signature marker genes that are informative for later differentiation fates, we detect genes enriched in the starting cells of each trajectory type using Scanpy’s built-in function \texttt{sc.tl.rank\_genes\_groups}.in~\ref{sec:trajanalysis} (Supplementary Table~\ref{tab:signature_genes}). Specifically, KRT18, KRT19, CD9, and TPM1 are enriched in early state of the mesodermal branch, in agreement with previous reports \citep{corcoran1997keratin, stavish2020generation, wilken2024tropomyosin}. In contrast, genes like TUBA1A, VIM, and FRZB emerged as early markers of the neuroectoderm/neural-crest branch, consistent with earlier findings \citep{chen2018increased, leyns1997frzb}. 

Applying similar analysis to the Statefate dataset yields two dominant trajectory clusters (Supplementary Figure~\ref{fig:trajectories2}). Early signature genes for each branch are provided in Supplementary Table~\ref{tab:signature_genes2}. The genes identified for the trajectory group I point very clearly to differentiation down the granulocytic (neutrophil-like) branch of the myeloid lineage and the acquisition of mature neutrophil–granule and innate-immune functions. Among these genes, 
the coordinated up-regulation of proteases (Prtn3, Elane, Ctsg), peroxidase (Mpo), regulators of protease activity (Slpi), and oxidative-stress enzymes (Gstm1) is a standard signature of neutrophil (granulocyte) maturation \citep{borregaard1997neutrophil}.  Sirpa and Plac8 mark the emerging myeloid identity and terminal differentiation, while metabolic enzymes like Gda and factors such as Dmkn reflect the simultaneous metabolic and microenvironmental remodeling that occurs as HSPCs commit to, and mature within, the granulocytic lineage \citep{zhang2008plac8}. In contrast, genes associated with trajectory II 
marks a much earlier, multipotent progenitor state with simultaneous lineage‐priming rather than terminal specialization \citep{paul2015transcriptional}. 

We apply the same trajectory-based analysis to the Reprogramming dataset, which profiles  mouse embryonic fibroblasts undergoing OKSM-induced reprogramming. For this data set, Doxycycline (Dox) is added at Day 0 and withdrawn at Day 8, after which cells are  cultured in either serum or 2i conditions and profiled by scRNA-seq through Day 18. As shown in Supplementary Figure~\ref{fig:trajectories3}b, developmental trajectories remain similar during the Dox phase (Day 0–8). Focusing on the Day 8–18 serum conditions, we identify three dominant trajectory clusters (Supplementary Figure~\ref{fig:trajectories3}c). The first trajectory type, enriched for Id3 and Col1a1, marks the emergence of stromal-like cells \citep{chen2021type}. The second trajectory type shows higher expression of Cldn6, Sox11, and Krt8, consistent with a combination of neural-like, trophoblast-like, and epithelial-like fates \citep{sugimoto2013tight,jankowski2006sry,golob2019high}. The third trajectory shares features of the first two and likely corresponds to intermediate or transitional states. These patterns recapitulate the key findings reported in the original study \citep{schiebinger2019optimal}.

% Trajectory I up-regulates Itgb2, Wfdc21, Lcn2, Pygl, genes typically associated with the granulocyte–monocyte progenitor arm, whereas trajectory II is characterised by Cpa3, Gata2, F2r, Csf2rb2, markers of the basophil–mast-cell–megakaryocyte lineage. Early signature genes for each branch are provided in Supplementary Table S2. Together, these results demonstrate that our method yields biologically plausible results.

\subsection{Sensitivity Analysis}
\label{sec:sensitivity}

To learn the robust local Fr\'echet regression model, we tune two key hyperparameters that control the bias-variance trade-off: the bandwidth $h$ and the unbalanced tolerance $\tau$. Generally, smaller $h$ and larger $\tau$ improve estimation accuracy but increase variability. We conduct a sensitivity analysis to guide the choice of these two parameters.

Table~\ref{sen:reprog10} reports the discrepancies between the predicted and ground truth distributions under different bandwidth choices $h$ for two time points for the Reprogramming dataset. At $t=10$, while the performance still exceeds the baseline method, increasing the bandwidth leads to a decrease in accuracy. This is likely due to strong temporal correlation around $t=10$, where putting more weight to distant distributions affects the prediction accuracy. A similar trend is observed in the Embryoid dataset (Table~\ref{sup:sen:EB7}, Table~\ref{sup:sen:EB13}), where smaller bandwidth achieves better predictive performance, though the difference is less pronounced. However, an opposite pattern appears at $t=16.5$, where a larger bandwidth yields a better prediction. This can be partially explained by Figure~\ref{fig:supp2}, which shows that adjacent time points fail to predict the held-out distribution faithfully. In such cases, leveraging information from more distant distributions improves estimation.

Overall, we recommend examining the temporal pattern of distributions within the datasets when selecting $h$. A smaller bandwidth is preferable when stronger temporal correlation exists, while a larger bandwidth helps improve robustness when local information is less reliable. Although the bandwidth can affect predictions, the results remain relative stable across different $h$ and outperforms the baseline methods in most cases. Empirically, we find that choosing the bandwidth to assign most weight to the nearest 5 to 10 distributions strikes a good balance between accuracy and stability.

The selection of $\tau$ follows a similar strategy. A smaller $\tau$, allowing a higher degree of unbalancedness, is preferable when the data is noisy and a more conservative estimate is desired. Conversely, a larger $\tau$ is preferred when the distributions are well-aligned and reliable. This is reflected in Table~\ref{sen:reprog10}, where two time points $t=7.5$ and $t=13.5$ exhibit opposite preferences for $\tau$. Specifically, Table~\ref{benchmark:embryoid} shows that for $t=7.5$, the robust UOT formulation performs better, and decreasing $\tau$ improves the prediction. A similar argument explains the reverse behavior at $t=13.5$. 

Overall, in temporal datasets where the data is less noisy, the impact of $\tau$ is less pronounced. The variation in prediction accuracy should remain within the range defined by the difference between the OT and UOT formulations. For the  Reprogramming dataset (Table~\ref{supp:tauReprog}), the differences across different $tau$ are negligible and likely caused by random variation during the training.

\ignore{
\begin{table}[h!]
    \caption{Benchmark on Reprogramming Dataset ($t=10$): Sensitivity to Bandwidth $h$. Metrics are averaged over 100 repetitions.
    }
    \vspace{5pt}
    \label{sen:reprog10}
    \centering
    \begin{tabular}{l|ccc|ccc|ccc}
        \toprule
        & \multicolumn{3}{c|}{$h=0.25$} & \multicolumn{3}{c|}{$h=0.5$} & \multicolumn{3}{c}{$h=1.0$} \\
        \cmidrule(lr){2-4} \cmidrule(lr){5-7} \cmidrule(lr){8-10}
        &  MMD(G) & EMD & $W_2$ & MMD(G) & EMD & $W_2$ & MMD(G) & EMD & $W_2$ \\
        \midrule
        OT          & 0.086 & 4.379 & 4.539 & 0.092 & 4.446 & 4.621 & 0.151 & 4.772 & 4.909 \\
        UOT         & \textbf{0.060} & \textbf{4.331} & \textbf{4.523} & \textbf{0.079} & \textbf{4.389} & \textbf{4.559} & \textbf{0.146} & \textbf{4.701} & \textbf{4.871} \\
        Baseline 1    & 0.128 & 5.059 & 5.307 & 0.128 & 5.059 & 5.307& 0.128 & 5.059 & 5.307 \\
        \bottomrule
    \end{tabular}
\end{table}

\begin{table}[h!]
    \caption{Benchmark on Reprogramming Dataset ($t=16.5$): Sensitivity to Bandwidth $h$. Metrics are averaged over 100 repetitions.}
    \vspace{5pt}
    \label{sen:reprog16.5}
    \centering
    \begin{tabular}{l|ccc|ccc|ccc}
        \toprule
        & \multicolumn{3}{c|}{$h=0.25$} & \multicolumn{3}{c|}{$h=0.5$} & \multicolumn{3}{c}{$h=1.0$} \\
        \cmidrule(lr){2-4} \cmidrule(lr){5-7} \cmidrule(lr){8-10}
        &  MMD(G) & EMD & $W_2$ & MMD(G) & EMD & $W_2$ & MMD(G) & EMD & $W_2$ \\
        \midrule
        OT          & 0.052 & 4.108 & 4.700 & 0.051 & 4.017 & 4.552 & \textbf{0.029} & 3.836 & 4.308 \\
        UOT         & \textbf{0.043} & \textbf{4.021} & \textbf{4.624} & \textbf{0.044} & \textbf{3.948} & \textbf{4.481} & 0.030 & \textbf{3.832} & \textbf{4.298} \\
        Baseline 1    & 0.198 & 5.203 & 5.814 & 0.198 & 5.203 & 5.814 & 0.198 & 5.203 & 5.814 \\
        \bottomrule
    \end{tabular}
\end{table}

\begin{table}[h!]
    \caption{Benchmark on Embryoid Dataset: Sensitivity to Unbalanced Tolerance $\tau$. Metrics are averaged over 100 repetitions.}
    \vspace{5pt}
    \label{sen:tauEB}
    \centering
    \begin{tabular}{l|ccc|ccc|ccc}
        \toprule
        & \multicolumn{3}{c|}{$\tau=1$} & \multicolumn{3}{c|}{$\tau=5$} & \multicolumn{3}{c}{$\tau=10$} \\
        \cmidrule(lr){2-4} \cmidrule(lr){5-7} \cmidrule(lr){8-10}
        &  MMD(G) & EMD & $W_2$ & MMD(G) & EMD & $W_2$ & MMD(G) & EMD & $W_2$ \\
        \midrule
        UOT($t=7.5$)  & 0.248 & 4.787 & 4.955 & 0.281 & 5.046 & 5.223 & 0.278 & 5.175 & 5.357 \\
        UOT($t=13.5$)         & 0.135 & 5.288 & 5.737 & 0.111 & 5.121 & 5.424 & 0.113 & 5.082 & 5.373 \\
        \bottomrule
    \end{tabular}
\end{table}

}

\begin{table}[h!]
    \caption{Sensitivity analysis to bandwith $h$ and unbalanced tolerance $\tau$ using (a)  Reprogramming dataset ($t=10$) and sensitivity to bandwidth $h$; (b) Reprogramming dataset ($t=16.5$) and sensitivity to bandwidth $h$; and (c) embryoid dataset and sensitivity to $\tau$. 
    Metrics are averaged over 100 repetitions.
    }
    \vspace{5pt}
    \label{sen:reprog10}
    \centering
    \begin{tabular}{l|ccc|ccc|ccc}
        \toprule
       \multicolumn{1}{c}{} &\multicolumn{9}{c}{
       Reprogramming dataset ($t=10$)}\\
       %\cline{2-10}
        & \multicolumn{3}{c|}{$h=0.25$} & \multicolumn{3}{c|}{$h=0.5$} & \multicolumn{3}{c}{$h=1.0$} \\
        \cmidrule(lr){2-4} \cmidrule(lr){5-7} \cmidrule(lr){8-10}
        &  MMD(G) & EMD & $W_2$ & MMD(G) & EMD & $W_2$ & MMD(G) & EMD & $W_2$ \\
        \midrule
        OT          & 0.086 & 4.379 & 4.539 & 0.092 & 4.446 & 4.621 & 0.151 & 4.772 & 4.909 \\
        UOT         & \textbf{0.060} & \textbf{4.331} & \textbf{4.523} & \textbf{0.079} & \textbf{4.389} & \textbf{4.559} & \textbf{0.146} & \textbf{4.701} & \textbf{4.871} \\
        Baseline 1    & 0.128 & 5.059 & 5.307 & 0.128 & 5.059 & 5.307& 0.128 & 5.059 & 5.307 \\
        \toprule
        \multicolumn{1}{c}{} &\multicolumn{9}{c}{
       Reprogramming dataset ($t=16.5$)}\\
        %\toprule
        & \multicolumn{3}{c|}{$h=0.25$} & \multicolumn{3}{c|}{$h=0.5$} & \multicolumn{3}{c}{$h=1.0$} \\
        \cmidrule(lr){2-4} \cmidrule(lr){5-7} \cmidrule(lr){8-10}
        &  MMD(G) & EMD & $W_2$ & MMD(G) & EMD & $W_2$ & MMD(G) & EMD & $W_2$ \\
        \midrule
        OT          & 0.052 & 4.108 & 4.700 & 0.051 & 4.017 & 4.552 & \textbf{0.029} & 3.836 & 4.308 \\
        UOT         & \textbf{0.043} & \textbf{4.021} & \textbf{4.624} & \textbf{0.044} & \textbf{3.948} & \textbf{4.481} & 0.030 & \textbf{3.832} & \textbf{4.298} \\
        Baseline 1    & 0.198 & 5.203 & 5.814 & 0.198 & 5.203 & 5.814 & 0.198 & 5.203 & 5.814 \\
        \toprule
        \multicolumn{1}{c}{} &\multicolumn{9}{c}{
       Embryoid data set}\\
       %\bottomrule
        & \multicolumn{3}{c|}{$\tau=1$} & \multicolumn{3}{c|}{$\tau=5$} & \multicolumn{3}{c}{$\tau=10$} \\
        \cmidrule(lr){2-4} \cmidrule(lr){5-7} \cmidrule(lr){8-10}
        &  MMD(G) & EMD & $W_2$ & MMD(G) & EMD & $W_2$ & MMD(G) & EMD & $W_2$ \\
        \midrule
        UOT($t=7.5$)  & 0.248 & 4.787 & 4.955 & 0.281 & 5.046 & 5.223 & 0.278 & 5.175 & 5.357 \\
        UOT($t=13.5$)         & 0.135 & 5.288 & 5.737 & 0.111 & 5.121 & 5.424 & 0.113 & 5.082 & 5.373 \\
        \bottomrule
    \end{tabular}
\end{table}

\section{Discussion}\label{Discussion}
Motivated by the analysis of dynamic single-cell gene expression data, we have developed computationally efficient and robust methods for Fréchet regression. We formulated this as a weighted barycenter estimation problem over high-dimensional probability measures. At the core of our approach is a reparametrization of the barycenter via a generative model, which converts an intractable optimization over probability measures into a more manageable optimization over the model’s parameters. For example, when learning a barycenter in single cell expression space, we parameterize a neural network to generate cells rather than directly parameterizing each constituent cell population.

We have evaluated the proposed methods  on three scRNA-seq datasets, each reflecting a distinct dynamic differentiation process. In all cases, our method produced an effective generative model capable of extrapolating cellular distributions from limited observations, and the resulting transport maps enabled us to track individual cell trajectories throughout the biological process. 

Although our focus here is local Fréchet regression with time as the sole covariate, the same framework readily extends to multivariable covariates by incorporating either a local covariate kernel or the parametric kernels introduced in \citet{petersen2019frechet}. For such population level scRNA-seq data, the Frechet regression provides an effective model to understand how cell distribution changes as a function of the covariates. The neural network-based reprameterization allows us to calculate the weighted barycenter of hundreds of high dimensional distributions. Due to noisy nature of single cell data and mixture of many cell types,  we expect that the use of the unbalance OT in the Fr\'echet regression and pre-training with VAE-NF will lead to more stable estimates of the cell distributions in the Fr\'echet regression. We plant to further investigate such applications. 

\section*{Acknowledgments} This research is supported by NIH grants GM129781 and P30DK050306.

\bigskip
\begin{center}
{\bf SUPPLEMENTARY MATERIAL}
\end{center}
 Supplementary  Material includes  proof of theorem 1, details of 
simulations and  results, and additional plots and figures related to the applications.

\ignore{
\noindent Remark: \\
{\it The current algorithm faces two distinct challenges: (1) the bias-variance trade-off and (2) mode collapse. It is important to emphasize that these are fundamentally different issues, and unbalanced optimal transport (UOT) specifically addresses only the first.

To illustrate this point, we first establish the connection between unbalanced OT and GANs by comparing the dual forms of UOT and GAN:

\begin{equation}
    UOT: \sup_{v\in \mathcal{C}(y)} \left\{ \int -\psi_1^*\left(-\inf_{T}(\frac{1}{2}\|x-T(x)\|^2 -v(T(x)))\right) d\mathbb{P}_i(x) + \int \psi_2^*\left(v(y)\right) d\mathbb{P}(y) \right\}.
\end{equation}

\[
GAN: \min_{G} \max_{D} \, \mathbb{E}_{\mathbf{x} \sim p_\text{data}} [\log D(\mathbf{x})] + \mathbb{E}_{\mathbf{z} \sim p_\mathbf{z}} [\log(1 - D(G(\mathbf{z})))]
\]

We observe that optimal transport (OT) can be viewed as "GAN + minimum cost." In other words, while GAN aims to train the generator to mimic the target distribution, OT imposes an additional requirement: the mapping must minimize a defined "cost." The training strategies for both are quite similar.Specifically, the potential function $\nu$ in OT is essentially the discriminator $D$ in GAN. If we check \citep{nowozin2016f}, \citep{goodfellow2014generative}, and \cite{arjovsky2017wasserstein}, We find that UOT is essentially an f-GAN with a minimum cost requirement. In this context, GAN corresponds to choosing the divergence function as the Jensen–Shannon divergence, while WGAN emerges when the divergence function is selected as an indicator function, reducing the problem to OT.

Clearly, if we look at the primal problem of the UOT, it avoids transferring all mass to the target distribution, offering the advantage of robustness to outliers. This makes it particularly suitable for real-world datasets, which are often complex and include outliers. Unbalanced OT provides a robust and improved estimate in such cases. However, achieving this robustness requires careful tuning of $\tau$, which we refer to as a \textbf{bias-variance trade-off problem}. Unbalanced Optimal Transport (UOT), however, does not fully address the root cause of convergence issues. While setting $\tau \to 0$ guarantees convergence, it leads to a trivial solution—the identity map. Notably, Generative Adversarial Networks (GANs) are already within the UOT regime, yet they continue to face the issue of \textbf{mode collapse}. This phenomenon remains an open question, with numerous hypotheses proposed, such as large gradients in the discriminator \citep{kodali2017convergence}, generator instability \citep{arjovsky2017towards}, and catastrophic forgetting (Note that the first two hypos are really what have discovered on the dataset). Significant research has been devoted to tackling this problem: \citep{metz2016unrolled}, \citep{nguyen2017dual}, \citep{lu2023cm}, \citep{srivastava2017veegan}, and \citep{xiao2021tackling}. We will try to borrow some techniques from these papers to address the problem we refer to as the \textbf{mode collapse} problem.
}
}

%\newpage

\bibliographystyle{imsart-nameyear}
\bibliography{ref}

\newpage

\appendix

\section{Supplementary Methods}

\subsection{Proof of Theorem 1}

\begin{proof}[Proof of Theorem~\ref{thm:fix_point}]
Fix $\mu\!\in\!\mathcal P_{2}(\R^{d})$ and weights $\alpha_1,\dots,\alpha_N\ge 0$ summing to one.
For each $i=1,\dots,N$ choose an optimal \emph{unbalanced} coupling
\[
\gamma_i^\star\in\Pi(\mu,\nu_i),
\qquad
W^{2}_{2,\mathrm{ub}}(\mu,\nu_i)=
\frac12\!\int_{\mathcal{X}\times\mathcal{Y}}\!\|x-y\|^{2}\,d\gamma_i^\star
+\tau\,D_\psi\!\bigl(\gamma_{i,1}^\star\|\nu_i\bigr).
\]

Let $\gamma_{i,x}^\star$ be the conditional law of $y$ given $x$ under $\gamma_i^\star$ and set
\[
T_i(x):=\int_{\mathcal{Y}}y\,d\gamma_{i,x}^\star(y),
\qquad
\bar T(x):=\sum_{i=1}^{N}\alpha_i T_i(x),
\qquad
\bar T(\mu):=\bar T_\#\mu.
\]

For any $x,y\in\R^{d}$ the quadratic identity
\[
\|x-y\|^{2}= \|x-\bar T(x)\|^{2}+\|\bar T(x)-y\|^{2}
            +2\langle\bar T(x)-y,\;x-\bar T(x)\rangle
\]
holds.  Taking $y\sim\gamma_{i,x}^\star$ removes the cross‐term because
$\int(y-T_i(x))\,d\gamma_{i,x}^\star(y)=0$.  Multiplying by $\alpha_i$ and summing over $i$
gives
\[
\sum_{i=1}^{N}\alpha_i\!\int_{\mathcal{Y}}\|x-y\|^{2}\,d\gamma_{i,x}^\star
  =\|x-\bar T(x)\|^{2}
  +\sum_{i=1}^{N}\alpha_i\!\int_{\mathcal{Y}}\|\bar T(x)-y\|^{2}\,d\gamma_{i,x}^\star.
\]
Since the first term on the right is non-negative, discarding it and integrating $x$ w.r.t.~$\mu$
yields
\begin{equation}\label{eq:transport-gap}
\sum_{i=1}^{N}\alpha_i
     \int_{\mathcal{X}\times\mathcal{Y}}\|x-y\|^{2}\,d\gamma_i^\star
\;\ge\;
\sum_{i=1}^{N}\alpha_i
     \int_{\mathcal{X}\times\mathcal{Y}}\|\bar T(x)-y\|^{2}\,d\gamma_i^\star.
\end{equation}

Next observe that pushing the first coordinate of $\gamma_i^\star$ through $\bar T$
produces $\tilde\gamma_i:=(\bar T(x),I)_\#\gamma_i^\star$ which lies in
$\Pi\!\bigl(\bar T(\mu))$.  Hence
\begin{equation}\label{eq:aux-upper}
\frac12\!\int_{\mathcal{X}\times\mathcal{Y}}\!\|\bar T(x)-y\|^{2}\,d\gamma_i^\star
 +\tau\,D_\psi\!\bigl(\gamma_{i,1}^\star\|\nu_i\bigr)
 \ge
 W^{2}_{2,\mathrm{ub}}\!\bigl(\bar T(\mu),\nu_i\bigr).
\end{equation}

Using the definition of $V$ and the optimality of $\gamma_i^\star$,
\[
V(\mu)-V\!\bigl(\bar T(\mu)\bigr)
  =\sum_{i=1}^{N}\alpha_i
     \Bigl[
       \frac12\!\int\|x-y\|^{2}\,d\gamma_i^\star
       +\tau\,D_\psi\!\bigl(\gamma_{i,1}^\star\|\nu_i\bigr)
       -W^{2}_{2,\mathrm{ub}}\!\bigl(\bar T(\mu),\nu_i\bigr)
     \Bigr].
\]
Insert \eqref{eq:aux-upper}, cancel the equal divergence terms, and apply
\eqref{eq:transport-gap}.  The result is
\[
V(\mu)-V\!\bigl(\bar T(\mu)\bigr)
  \ge
  \frac12\int_{\mathcal{X}}\|x-\bar T(x)\|^{2}\,d\mu(x)
  \ge 0.
\]

Finally, equality holds iff the integrand in the last expression vanishes $\mu$-a.s.,
i.e.\ $\bar T(x)=x$ $\mu$-a.s., which implies $\bar T(\mu)=\mu$ and forces equality
throughout.  Conversely, if $\bar T(\mu)=\mu$ the chain of inequalities is tight.
\end{proof}

\subsection{Details of the simulation}
\label{supp:simulation}

To demonstrate that our fixed-point algorithm is robust to outliers and yields faithful representations of the main modes, we evaluate the quality of the computed barycenter on synthetically generated data. Each distribution is modeled as a Gaussian mixture, which reflects the heterogeneous nature of single-cell datasets—where different cell types occupy distinct regions in transcriptomic space. We simulate ten 10-dimensional Gaussian mixtures and compute their barycenter using the proposed fixed-point method.

To generate ten Gaussian mixtures, we first set up a template mixture with pre-specified means, covariances, and component weights. Each observed distribution is then created by perturbing the parameters of this template. The template consists of eight components: the first four represent the main modes, and the remaining four correspond to outliers. The component means are set as \([2, \ldots, 2]\), \([2, \ldots, 2, -2, \ldots, -2]\), \([-2, \ldots, -2, 2, \ldots, 2]\), and \([-2, \ldots, -2]\) for the main modes, and \([10, \ldots, 10, 0, \ldots, 0]\), \([-10, \ldots, -10, 0, \ldots, 0]\), \([0, \ldots, 0, 10, \ldots, 10]\), and \([0, \ldots, 0, -10, \ldots, -10]\) for the outliers. All components share the same covariance matrix \(0.5I\), and the weights are assigned as \\
\([0.24, 0.24, 0.23, 0.23, 0.015, 0.015, 0.015, 0.015]\), with the outlier components comprising only 6\% of the total mass. 

To generate the ten observed Gaussian mixtures, we introduce small Gaussian perturbations to the means, covariances, and weights of the template while ensuring validity—that is, positive definite covariance matrices and non-negative weights summing to one. We sample 3,000 data points from each mixture to serve as input for barycenter computation. The OT barycenter is computed using the fixed-point algorithm proposed by \citet{korotin2022wasserstein}, while the UOT barycenter is obtained using our method with the unbalanced penalty parameter set to $\tau=1$.

\subsection{The COVID dataset}
\label{supp:COVID}

The dataset \citep{melms2021COVID} profiles lung tissues from nineteen individuals who died of COVID-19 and underwent rapid autopsy, along with seven control individuals, to investigate disease-associated transcriptional changes. About 116,000 nuclei were profiled using single-nucleus RNA sequencing (snRNA-seq) on the 10x Genomics Chromium platform. We downloaded the processed and log-transformed dataset from \url{https://portals.broadinstitute.org/single_cell}. We then extracted the top 2000 HVGs, and computed top 50 PCs. When evaluating the effectiveness of the pretraining strategy, we compute the unbalanced barycenter ($\tau=5$) of the 19 individuals who died of COVID-19 in the 50-dimensional PC space.

\section{Supplementary Tables}

\renewcommand{\thetable}{S\arabic{table}}
\setcounter{table}{0}

\begin{table}[h!]
    \caption{Benchmark on Embryoid Dataset ($t=7.5$): Sensitivity to Bandwidth $h$. Metrics are averaged over 100 repetitions.}
    \vspace{5pt}
    \label{sup:sen:EB7}
    \centering
    \begin{tabular}{l|ccc|ccc|ccc}
        \toprule
        & \multicolumn{3}{c|}{$h=3$} & \multicolumn{3}{c|}{$h=7$} & \multicolumn{3}{c}{$h=15$} \\
        \cmidrule(lr){2-4} \cmidrule(lr){5-7} \cmidrule(lr){8-10}
        &  MMD(G) & EMD & $W_2$ & MMD(G) & EMD & $W_2$ & MMD(G) & EMD & $W_2$ \\
        \midrule
        OT          & \textbf{0.280} & 5.260 & 5.464 & 0.307 & 5.326 & 5.531 & \textbf{0.319} & 5.345 & 5.537 \\
        UOT         & 0.281 & \textbf{5.046} & \textbf{5.223} & \textbf{0.305} & \textbf{5.114} & \textbf{5.282} & 0.348 & \textbf{5.127} & \textbf{5.302} \\
        Baseline 1    & 0.381 & 5.370 & 5.558 & 0.381 & 5.370 & 5.558& 0.381 & 5.370 & 5.558 \\
        \bottomrule
    \end{tabular}
\end{table}

\begin{table}[h!]
    \caption{Benchmark on Embryoid Dataset ($t=13.5$): Sensitivity to Bandwidth $h$. Metrics are averaged over 100 repetitions.}
    \vspace{5pt}
    \label{sup:sen:EB13}
    \centering
    \begin{tabular}{l|ccc|ccc|ccc}
        \toprule
        & \multicolumn{3}{c|}{$h=3$} & \multicolumn{3}{c|}{$h=7$} & \multicolumn{3}{c}{$h=15$} \\
        \cmidrule(lr){2-4} \cmidrule(lr){5-7} \cmidrule(lr){8-10}
        &  MMD(G) & EMD & $W_2$ & MMD(G) & EMD & $W_2$ & MMD(G) & EMD & $W_2$ \\
        \midrule
        OT         & 0.114 & \textbf{5.076} & \textbf{5.319} & 0.189 & 5.233 & 5.522 & \textbf{0.206} & \textbf{5.382} & \textbf{5.669} \\
        UOT         & \textbf{0.111} & 5.121 & 5.424 & \textbf{0.166} & \textbf{5.184} & \textbf{5.520} & 0.234 & 5.518 & 5.869 \\
        Baseline 1    & 0.257 & 5.535 & 5.607 & 0.257 & 5.535 & 5.607 & 0.257 & 5.535 & 5.607 \\
        \bottomrule
    \end{tabular}
\end{table}

\begin{table}[h!]
    \caption{Benchmark on Reprogramming Dataset: Sensitivity to Unbalanced Tolerance $\tau$. Metrics are averaged over 100 repetitions.}
    \vspace{5pt}
    \label{supp:tauReprog}
    \centering
    \begin{tabular}{l|ccc|ccc|ccc}
        \toprule
        & \multicolumn{3}{c|}{$\tau=1$} & \multicolumn{3}{c|}{$\tau=5$} & \multicolumn{3}{c}{$\tau=10$} \\
        \cmidrule(lr){2-4} \cmidrule(lr){5-7} \cmidrule(lr){8-10}
        &  MMD(G) & EMD & $W_2$ & MMD(G) & EMD & $W_2$ & MMD(G) & EMD & $W_2$ \\
        \midrule
        UOT($t=10$)  & 0.043 & 3.920 & 4.415 & 0.044 & 3.948 & 4.481 & 0.049 & 3.927 & 4.426 \\
        UOT($t=16.5$)         & 0.068 & 4.267 & 4.455 & 0.079 & 4.446 & 4.621 & 0.077 & 4.319 & 4.510 \\
        \bottomrule
    \end{tabular}
\end{table}

\begin{table}[H]
\centering
\caption{Top 10 signature genes enriched in the starting cells of each trajectory group in the Embryoid dataset. Mean log-expression is averaged across starting cells in each group. Score is the Z-statistic from the Wilcoxon rank-sum test.}
\label{tab:signature_genes}
\begin{tabular}{lrr|lrr}
\toprule
\multicolumn{3}{c|}{Group I} & \multicolumn{3}{c}{Group II} \\
Gene & Mean Expr. & Score & Gene & Mean Expr. & Score \\
\midrule
EPCAM   & 1.30 & 22.33  & TUBA1B  & 3.94 & 20.74 \\
PDLIM1  & 0.98 & 20.54  & VIM     & 1.71 & 20.64 \\
KRT18   & 2.50 & 19.01  & TPBG    & 1.15 & 19.31 \\
S100A10 & 0.39 & 18.88  & TUBA1A  & 2.27 & 17.52 \\
TDGF1   & 1.34 & 18.67  & PTN     & 1.11 & 17.25 \\
TPM1    & 1.69 & 18.41  & CRABP2  & 1.85 & 16.93 \\
CD9     & 0.97 & 17.77  & HMGB2   & 2.20 & 16.61 \\
MGST1   & 0.91 & 17.76  & GPC3    & 2.02 & 16.29 \\
POU5F1  & 1.66 & 17.49  & FRZB    & 0.70 & 16.20 \\
KRT19   & 1.31 & 17.27  & DLK1    & 1.20 & 16.17 \\
\bottomrule
\end{tabular}
\end{table}

\begin{table}[H]
\centering
\caption{Top 10 signature genes enriched in the starting cells of each trajectory group in the Statefate dataset. Mean log-expression is averaged across starting cells in each group. Score is the Z-statistic from the Wilcoxon rank-sum test.}
\label{tab:signature_genes2}
\begin{tabular}{lrr|lrr}
\toprule
\multicolumn{3}{c|}{Group I} & \multicolumn{3}{c}{Group II} \\
Gene & Mean Expr. & Score & Gene & Mean Expr. & Score \\
\midrule
Prtn3      & 2.62 & 36.42 & Gata2      & 1.72 & 29.54 \\
Elane      & 2.93 & 33.70 & F2r        & 1.35 & 26.10 \\
Plac8      & 3.35 & 31.06 & Alox5      & 1.45 & 25.84 \\
Mpo        & 2.13 & 28.91 & Serpina3g  & 1.74 & 22.07 \\
Ctsg       & 2.48 & 28.30 & Scin       & 1.12 & 21.84 \\
Dmkn       & 1.40 & 27.98 & Csf2rb2    & 1.53 & 21.29 \\
Gda        & 1.31 & 27.42 & Ctla2a     & 0.97 & 20.50 \\
Gstml      & 1.76 & 25.84 & Ptprcap    & 1.32 & 20.32 \\
Slpi       & 1.23 & 25.06 & Ikzf2      & 0.95 & 20.14 \\
Sirpa      & 1.29 & 24.75 & Car2       & 1.16 & 19.65 \\
\bottomrule
\end{tabular}
\end{table}

% \begin{table}[h!]
%     \caption{Comparison across different numbers of distributions. Metrics include $\mathcal{W}_2^2$ to the true barycenter and mapping accuracy, reported for Algorithm 1 and Algorithm 2.}
%     \vspace{5pt}
%     \label{benchmark:barycenter_mapping}
%     \centering
%     \begin{tabular}{l|cc|cc}
%         \toprule
%         \# Distributions & \multicolumn{2}{c|}{$\mathcal{W}_2^2$ to true barycenter} & \multicolumn{2}{c}{Mapping accuracy} \\
%         \cmidrule(lr){2-3} \cmidrule(lr){4-5}
%                          & Algorithm 1 & Algorithm 2 & Algorithm 1 & Algorithm 2 \\
%         \midrule
%         10              & 0.524       & 0.412       & 83.2\%       & 85.7\%       \\
%         50              & 0.389       & 0.301       & 87.9\%       & 90.1\%       \\
%         100             & 0.342       & 0.279       & 89.3\%       & 92.4\%       \\
%         \bottomrule
%     \end{tabular}
% \end{table}

\section{Additional Figures}

\renewcommand{\thefigure}{S\arabic{figure}}
\setcounter{figure}{0}

\begin{figure}[H]
    \centering
    \includegraphics[width=0.95\linewidth]{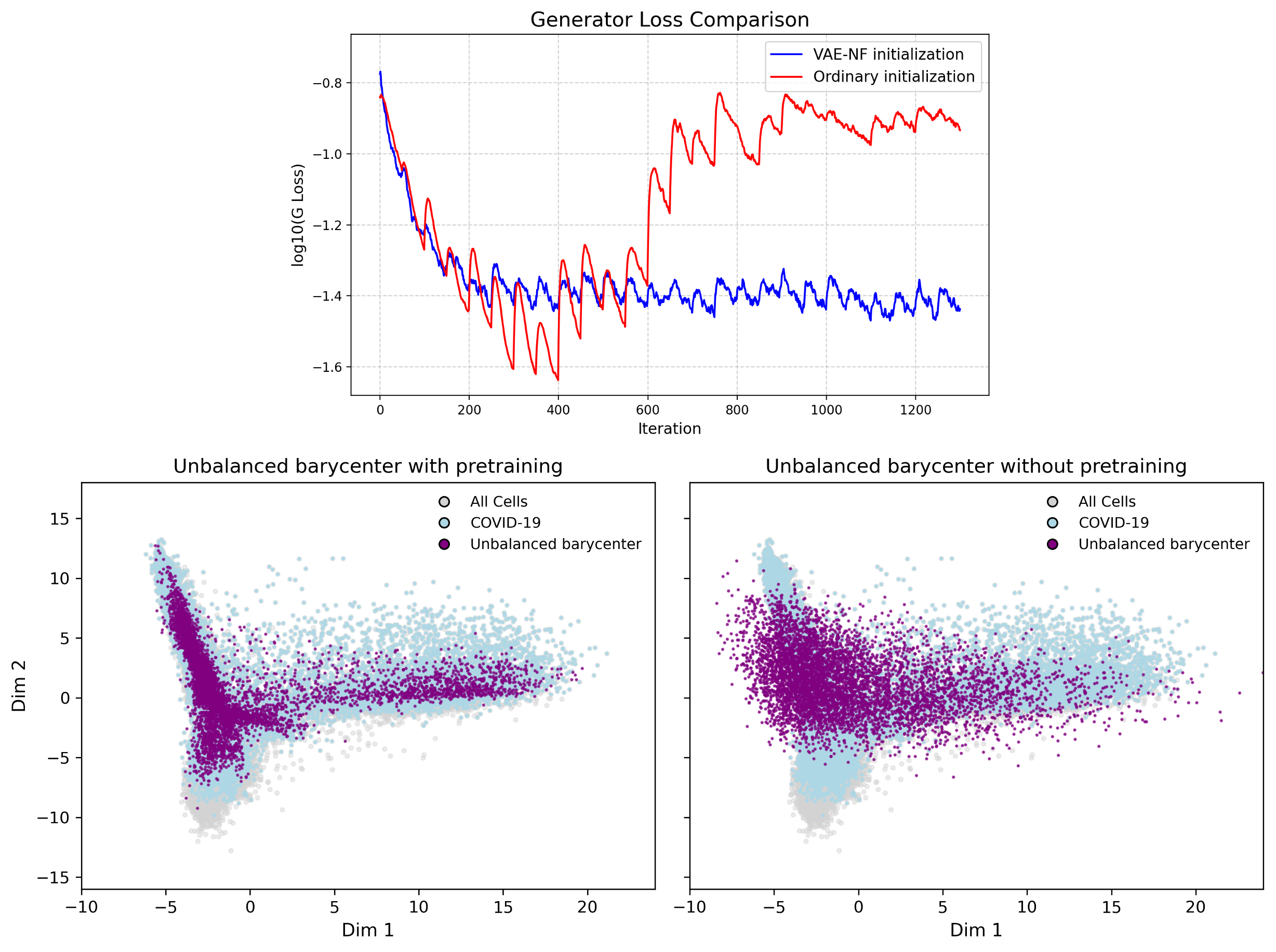}
    \caption{
    \textbf{Top:} Generator loss during training, comparing two initialization strategies. The y-axis shows the log-transformed regression loss. Lower and more stable loss indicates better convergence of the generator to the fixed point. \textbf{Bottom:} Learned unbalanced barycenters for COVID-19 subjects in the COVID dataset (see Supplement~\ref{supp:COVID}), under two different initialization schemes. Grey points represent pooled cells from all conditions, light blue points denote COVID-19 samples, and purple points show the learned barycenter. Pretraining with VAE-NF leads to a more structured and concentrated barycenter compared to standard initialization.
    }
    \label{fig:supp_pretrain}
\end{figure}

\begin{figure}[H]
    \centering
    \includegraphics[width=1\linewidth]{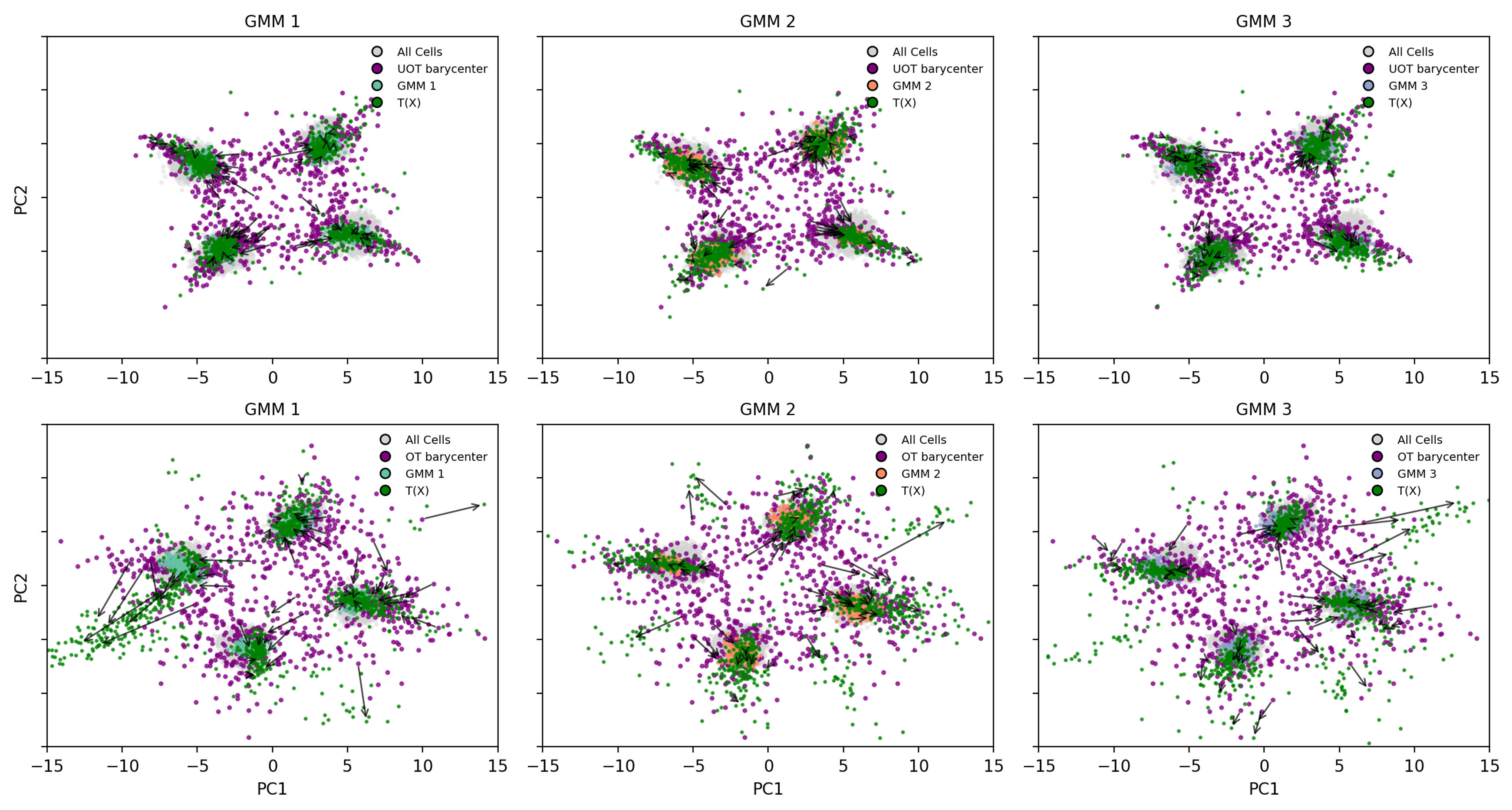}
    \caption{
    Transport plans from the barycenter to three selected Gaussian mixtures. Arrows indicate the learned mappings from barycenter samples to the target mixture. Grey points represent pooled samples from all ten mixtures; purple points denote the learned barycenter; green points are pushforward samples; and colored points labeled “GMM $i$” show the ground truth distribution of each mixture.
    \textbf{Top row:} unbalanced optimal transport (UOT) from the robust Fr\'echet barycenter. 
    \textbf{Bottom row:} classical balanced optimal transport (OT) from the Wasserstein barycenter.
    }
    \label{fig:supp_simu}
\end{figure}

\begin{figure}[H]
    \centering
    \includegraphics[width=0.9\linewidth]{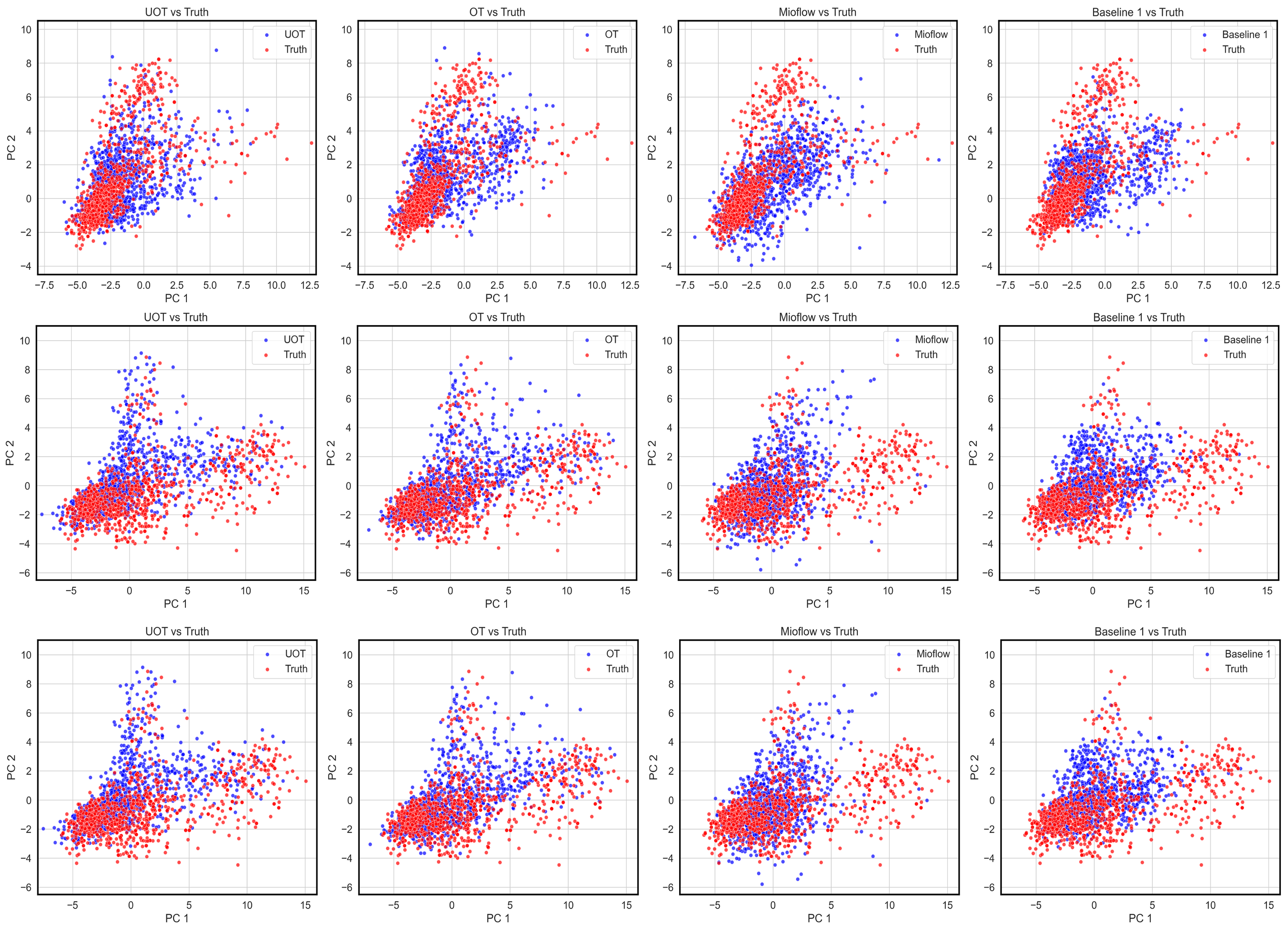}
    \caption{Comparison of the predicted cellular distributions (blue) against the ground truth (red) visualized on the top two principal components (PC1 and PC2) across different methods for the Embryoid dataset. \textbf{First row:} \(t = 7.5\); \textbf{Second row:} \(t = 13.5\); \textbf{Third row:} \(t = 19.5\).}
    \label{fig:supp1}
\end{figure}

\begin{figure}[H]
    \centering
    \includegraphics[width=0.9\linewidth]{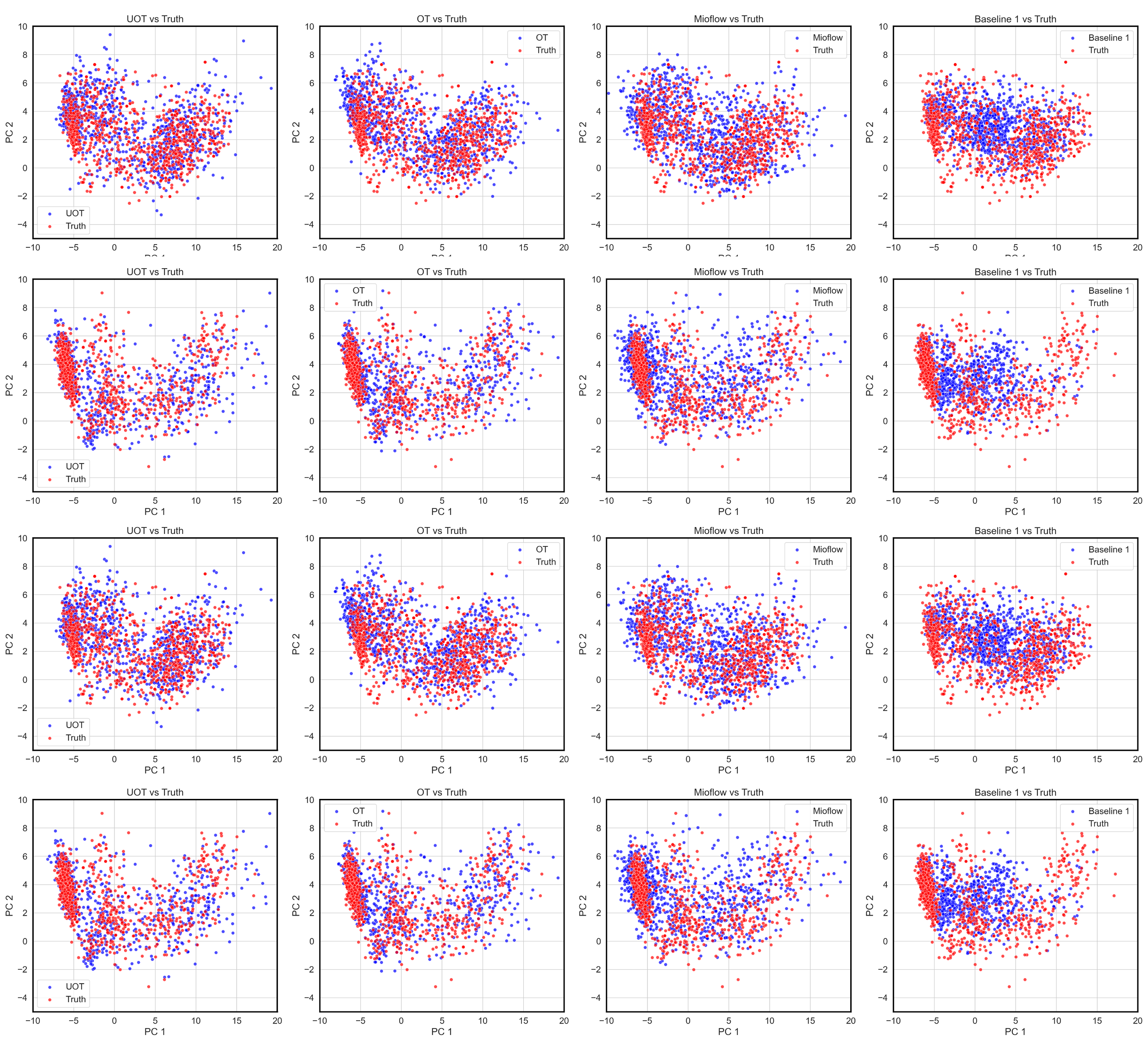}
    \caption{Comparison of the predicted cellular distributions (blue) against the ground truth (red) visualized on the top two principal components (PC1 and PC2) across different methods for the Reprogramming dataset. \textbf{First row:} \(t = 6\); \textbf{Second row:} \(t = 10\); \textbf{Third row:} \(t = 13.5\); \textbf{Fourth row:} \(t = 16.5\)}
    \label{fig:supp2}
\end{figure}

\begin{figure}[H]
    \centering
    \includegraphics[width=\linewidth]{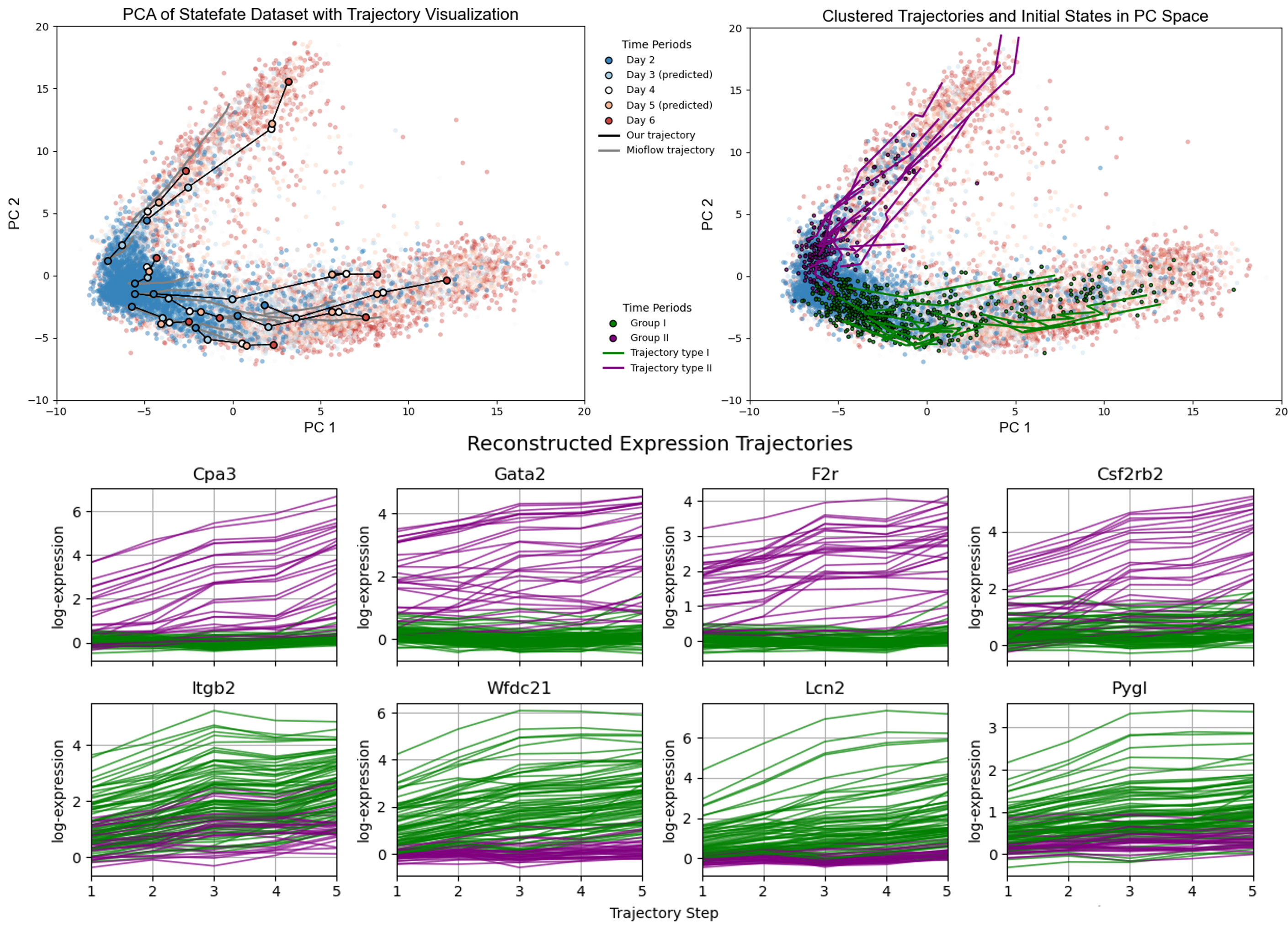}
    \caption{Distributional interpolation and trajectory reconstruction in the Statefate dataset, visualized using the top two principal components (PC1 and PC2). \textbf{Top left,} Reconstructed trajectories from Day 2 to Day 6. Black lines trace our predicted trajectories; Each dot on them represents the predicted position of a cell in transcriptomic space at a given time point. The grey lines correspond to Mioflow predictions for the same cells. \textbf{Top right,} Two differentiation fates identified via Dynamic Time Warping (DTW) clustering. Initial states are colored by group (Group I: green; Group II: purple). \textbf{Bottom,} Reconstructed gene expression trajectories for genes showing significant differences in intercept or slope across trajectory types.}
    \label{fig:trajectories2}
\end{figure}

\begin{figure}[H]
    \centering
    \includegraphics[width=\linewidth]{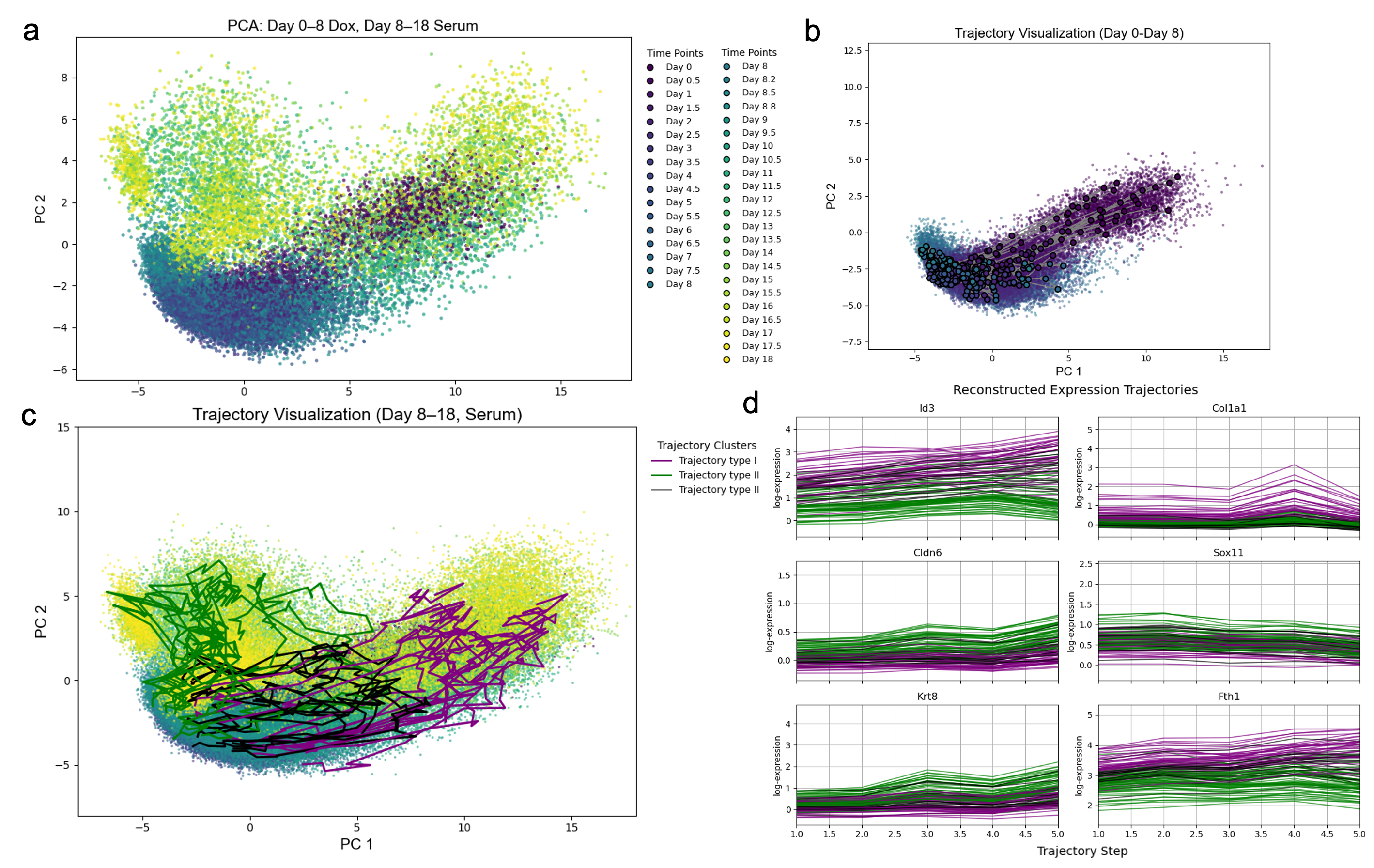}
    \caption{Trajectory analysis of cellular reprogramming dynamics. 
    \textbf{a,} Single-cell profiles from Day 0 to Day 8 (Dox condition) and Day 8 to Day 18 (serum condition), visualized using the top two PCs. 
    \textbf{b,} Developmental trajectories inferred for the early phase (Day 0–Day 8). 
    \textbf{c,} Three distinct differentiation trajectories identified from Day 8 to Day 18 under serum conditions using Dynamic Time Warping (DTW) clustering. 
    \textbf{d,} Reconstructed gene expression trajectories for representative marker genes, illustrating dynamic differences in expression patterns (intercepts and slopes) across trajectory types.}
    \label{fig:trajectories3}
\end{figure}

\end{document}